\documentclass[11pt,a4paper]{article}
\usepackage{float}
\usepackage{graphicx}
\usepackage{amssymb}
\usepackage{color}
\usepackage{algorithm}
\usepackage{algpseudocode}
\algtext*{EndIf}
\algtext*{EndFor}
\usepackage{verbatim}
\usepackage{subfig}
\usepackage{amsmath}
\usepackage{amsthm}

\newtheorem{definition}{Definition}
\newtheorem{property}{Property}
\newtheorem{lemma}{Lemma}
\newtheorem{theorem}{Theorem}
\newtheorem{corollary}{Corollary}
\newtheorem{remark}{Remark}

\newcommand*{\Reals}{\mathbb{R}}%
\newcommand*{\AbsFamily}{{F}}
\newcommand*{\InpSet}{{F}}

\newcommand*{\nclus}{k}

\newcommand{\aPoint}{t}
\newcommand{\rep}{t}
\newcommand*{\site}{s}
\newcommand{\SiteSet}{S}
\newcommand{\nlevs}{h}
\newcommand{\Plane}{\mathbb{R}^2}
\newcommand{\aClus}{C}
\newcommand*{\curClus}{C}
\newcommand{\pointFromClus}{c}
\newcommand{\aClusI}{P}

\newcommand{\aClusIl}{Q}
\newcommand{\df}{d_{\text{\textup{\textrm{f}}}}}
\newcommand*{\FVD}{\text{\textup{\textrm{\textmd{FVD}}}}}

\newcommand*{\freg}{\text{\textup{\textrm{\textmd{freg}}}}}
\newcommand*{\fskel}{\mathcal{T}}
\newcommand*{\hreg}{\text{\textup{\textrm{\textmd{hreg}}}}}

\newcommand*{\HVD}{\text{\textup{\textrm{\textmd{HVD}}}}}
\newcommand*{\VD}{\text{\textup{\textrm{\textmd{VD}}}}}
\newcommand*{\hvd}{\text{\textup{\textrm{\textmd{HVD}}}}}

\newcommand*{\cand}{\text{\textup{\textrm{cand}}}}
\DeclareMathOperator{\bd}{\partial}
\DeclareMathOperator{\conv}{conv}
\newcommand*{\cardinality}[1]{\lvert#1\rvert}

\newcommand*{\Sep}{\text{\textup{\textrm{CD}}}}
\newcommand*{\complexity}[1]{\lVert#1\rVert}
\newcommand*{\E}{\mathbb{E}}

\newcommand*{\cdisk}{\mathcal{D}}
\newcommand*{\cldf} {D^{\text{\textup{\textrm{\textmd{f}}}}}}

\floatname{algorithm}{Procedure}

\newcommand*{\Kpure}{K^{\text{\textup{\textrm{\textmd{pure}}}}}}
\newcommand*{\Npure}{N^{\text{\textup{\textrm{\textmd{pure}}}}}}
\newcommand*{\Kmix} {K^{\text{\textup{\textrm{\textmd{mix}}}}}}
\newcommand*{\Nmix} {N^{\text{\textup{\textrm{\textmd{mix}}}}}}
\newcommand*{\cpure}{c^{\text{\textup{\textrm{\textmd{pure}}}}}}
\newcommand*{\cmix} {c^{\text{\textup{\textrm{\textmd{mix}}}}}}

\newcommand*{\hregl}{\smash{\text{\textup{\textrm{hreg}}}}_{F}^{(\ell)}}
\newcommand*{\hreglc}{\smash{\overline{\text{\textup{\textrm{hreg}}}}_{F^{(\ell)}}}}

\newcommand{\deleted}[1]{}
\newcommand*{\newc}{C}

%
\sloppy
\begin{document}
\title{A Randomized Incremental Algorithm for the Hausdorff Voronoi Diagram
         of Non-crossing Clusters
\thanks{%
       Supported in part by the Swiss National Science Foundation
       project 20GG21-134355, under the auspices of the ESF EUROCORES program
       EuroGIGA/VORONOI.%
       }%
}
\date{}

\author{
  Panagiotis Cheilaris\thanks{Faculty of Informatics,
Universit{\`a} della Svizzera \mbox{italiana}, Lugano, Switzerland}   \and
  Elena Khramtcova\footnotemark[2]   \and 
  Stefan Langerman\footnote{D{\'e}partment d'Informatique, Universit{\'e} Libre de Bruxelles, Brussels, Belgium} 
  \and
  Evanthia Papadopoulou\footnotemark[2] 
      }



\maketitle

\begin{abstract}

In the Hausdorff Voronoi diagram of a 
family of \emph{clusters of points}
in the plane, the distance between a point $t$ and a cluster $P$ is
measured as the maximum distance between
$t$ and any point in $P$, and the diagram is defined 
in a nearest-neighbor sense for the input clusters.
In this paper we consider 
\emph{non-crossing} 
clusters in the plane, for which the  combinatorial
complexity of the Hausdorff Voronoi diagram is linear in
the total number of points, $n$, on the convex hulls of all clusters. 
We present a randomized incremental construction, 
based on point location,  that computes  
this diagram in expected $O(n\log^2{n})$ time and expected $O(n)$ space.
Our techniques efficiently handle non-standard characteristics of generalized
Voronoi diagrams, such as sites of non-constant complexity, 
sites that are not enclosed in their Voronoi regions, and empty Voronoi
regions.  The diagram finds direct applications in VLSI computer-aided design.  

\end{abstract}

\section{Introduction} 
\label{sec:intro}

Given a set of simple \emph{sites} contained in some space, 
the \emph{Voronoi region} of each site $s$ is the geometric locus
of points in this 
space that are closer to $s$ than to any other site.
In the classic Voronoi diagram, each site is a point, and closeness is
measured according to the Euclidean distance.
In this work, we investigate randomized algorithms for constructing the
\emph{Hausdorff Voronoi diagram}.
The containing space is $\Reals^2$,
each site is a cluster of points (i.e., a set of points), and
closeness of a point $t$ 
to a cluster $P$
is measured by the \emph{farthest distance}
$\df(t, P) = \max_{p \in P}d(t,p)$, where $d(\cdot,\cdot)$ denotes 
the 
Euclidean distance between two points.
%
The farthest distance equals the \emph{Hausdorff distance}
between $t$ and  $P$, hence, the name of the diagram.
%
No two different clusters may have a common point.  
Let $\nclus$  denote the number of input clusters and let $n$ be the total number
of points on the convex hulls of all clusters. 
\deleted{ 
Let $\nclus$  denote the number of input clusters and let $n$ be the total number
of points on the convex hulls of all clusters. 
No two different clusters may have a common point.  
}
As it will be evident in the sequel, only points on the convex hulls
of individual clusters can be  
relevant to the Hausdorff diagram, thus, we
assume that each cluster equals its convex hull. 

Our motivation for investigating the Hausdorff Voronoi diagram comes
from applications in Very Large Scale Integration (VLSI) circuit design. The Hausdorff
Voronoi diagram has been used to estimate efficiently the \emph{critical
area} of a chip design for various open faults
\cite{Papadopoulou2001ieeecad,ep2004algorithmica,Papadopoulou2011ieeecad}; 
critical area is a measure reflecting the
sensitivity of a VLSI design to random defects during manufacturing.
The diagram can also find applications in other networks embedded in
the plane, such as transportation networks, 
where critical area may need to be extracted for the purpose of
flow control and  disaster avoidance. 

\subsection{Previous Work}

The Hausdorff Voronoi diagram was first considered by Edelsbrunner~{et~al.}~\cite{EGS1989}
under the name \emph{cluster Voronoi diagram}.
For arbitrary clusters, the authors proved that the combinatorial
complexity of the diagram is $O(n^2 \alpha(n))$ and also provided
an algorithm of the same time complexity for its construction,
where $\alpha(n)$ is the inverse Ackermann function.
These bounds were later improved to $O(n^2)$ \cite{EP-HVD-d_and_q}.
The Hausdorff Voronoi diagram is equivalent to
the upper envelope of a family of lower envelopes
of an arrangement of hyperplanes in $\mathbb{R}^3$ 
(each envelope corresponds to a cluster)~\cite{EGS1989}.
When the convex hulls of the clusters are disjoint,
the combinatorial complexity of the diagram is $O(n)$. 
In fact, the diagram remains linear for  \emph{non-crossing}
clusters  (see Definition~\ref{def:noncrossing}),
a weaker condition than disjointness of convex hulls~\cite{EP-HVD-d_and_q}.
The $O(n^2)$-time  algorithm to compute the Hausdorff 
diagram, 
although optimal in the worst case, remains quadratic in all cases, even for 
linear-complexity  instances of the diagram.

\begin{definition}
\label{def:noncrossing}
Two clusters $\aClusI$ and $\aClusIl$ are called \emph{non-crossing} if
the convex hull of $\aClusI \cup \aClusIl$
admits at most two supporting segments with one endpoint in $\aClusI$
and one endpoint in $\aClusIl$. See Fig.~\ref{fig:disjointcrossing}.
\end{definition}

\begin{figure} [hbtp]
\centering
\includegraphics%
{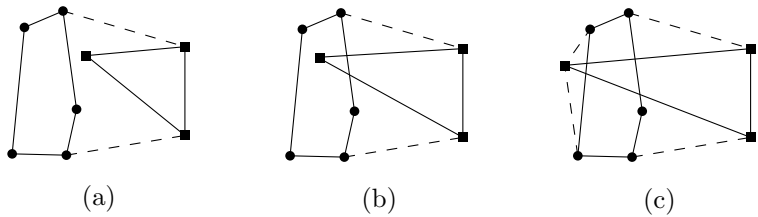} %
\caption{%
Non-crossing ((a) disjoint and (b) non-disjoint) clusters,  and (c) crossing clusters.
Dashed lines show the 
relevant supporting segments.}
\label{fig:disjointcrossing}%
\end{figure}

The combinatorial complexity (size) of the Hausdorff Voronoi diagram was  
shown to be $O(n+m)$, where $m$ is the number of certain
supporting segments between pairs of crossing clusters, called
\emph{crucial}, and this was shown to be tight~\cite{EP-HVD-d_and_q,ep2004algorithmica}.
In the worst case, $m$ is $O(n^2)$, however,
if clusters are 
non-crossing
($m=0$) the diagram  has size $O(n)$. There are plane sweep and divide and conquer
algorithms for constructing the Hausdorff Voronoi diagram of arbitrary
clusters~\cite{ep2004algorithmica,EP-HVD-d_and_q}.
Both algorithms have a $K \log{n}$ term in their time complexity, where $K$
is a parameter reflecting the number of pairs of clusters such that one is contained
in a specially defined enclosing circle of the other,
for example, the minimum enclosing circle 
in~\cite{EP-HVD-d_and_q}. 
However, $K$ can be $\omega(n)$ (superlinear), 
even in the case of non-crossing clusters
when the combinatorial
complexity of the diagram is $O(n)$.

A more recent parallel algorithm~\cite{Dehne-coarse}
constructs the Hausdorff Voronoi diagram of non-crossing clusters
in $O(p^{-1} n \log^4 n)$ time with $p$ processors, which implies a divide and conquer sequential
algorithm of time complexity $O(n \log^4 n)$ and space complexity
$O(n\log^2 n)$.

\deleted{
For non-crossing clusters, Voronoi regions are
connected and the combinatorial
complexity of the Hausdorff diagram is 
thus
$O(n)$. 
}
The Hausdorff Voronoi diagram of non-crossing clusters
is an instance of  
\emph{abstract Voronoi diagrams},  introduced by
Klein~\cite{klein-habilitation}. 
Using the randomized incremental 
construction for abstract Voronoi diagrams~\cite{klein-avd},
the Hausdorff diagram for non-crossing sites 
can be computed in expected $O(bn \log n)$ time,
where $b$ is the time 
to compute the bisector between two clusters~\cite{ahknu1997dcg};
however, $b$ can be $\Theta(n)$. 
This
framework was successfully applied
to compute the Voronoi diagram of disjoint polygons~\cite{snoyeink} in $O(k\log{n})$ time,
where $k$ is the number of sites, and $n$ is their total
combinatorial complexity. However, it is not easy
to apply a similar approach to the Hausdorff
Voronoi diagram,  because of a fundamental difference between
the farthest and the nearest distance between a point and a convex
polygon~\cite{edels-extreme}.

The  Hausdorff Voronoi diagram is a \emph{min-max} 
type of diagram, where every point in the plane is assigned to the region of the 
\emph{nearest} cluster with respect to the \emph{farthest} distance.
A ``dual'' \emph{max-min} diagram
has also been considered in the literature
\cite{ahiklmps2001ewcg,fpvd2011cg,hks1993dcg}, having been termed the \emph{farthest color Voronoi diagram}.
If clusters are disjoint simple polygons,
a divide and conquer algorithm computes this diagram in 
$O(n \log^3 n)$ time~\cite{fpvd2011cg}, where $n$ is the total 
complexity of the polygons.

For more information on generalized Voronoi diagrams see, e.g., the
book of Aurenhammer et al.~\cite{Aurenbook}.

\subsection{Our Contribution}
\label{subsec:our_contrib}
In this paper we give a randomized incremental 
approach to compute the Hausdorff Voronoi diagram of a family of
$\nclus$ non-crossing clusters, based on  point location. 
Non-crossing clusters is the condition under which 
Voronoi regions are connected and 
%
the combinatorial
complexity of the Hausdorff diagram is $O(n)$. 
In addition, it is of interest to our motivating application, where the
number of crossing clusters is typically small and can possibly be
regarded as zero.

In our algorithm, clusters are inserted in random order one by one,
while the diagram is maintained in a  dynamic data
structure,
which can answer various types of  point location queries efficiently.
 To insert a cluster, 
a \emph{representative point} in the new Voronoi region 
is first identified, and then the new region is traced around it, 
while the data structure is updated. This general 
technique has been followed previously by  
randomized algorithms to construct the Voronoi diagram  of convex
objects and the Delaunay
triangulation~\mbox{\cite{Boissonat-curved_vd,devillers-delaunay_hier,KARAVELAS-vd_of_co_in_the_plane}}. 
Identifying a representative point in the  new Hausdorff Voronoi
region is a major technical challenge
in the construction of the Hausdorff diagram.
This is difficult for the Hausdorff diagram
because: (a)~the region of the new cluster might not contain
any of its points, (b)~clusters have non-constant  complexity,
and (c)~the addition of a new cluster may make an existing region empty. 

The dynamic data 
structure that we use is a variant of 
the \emph{Voronoi hierarchy}~\cite{KARAVELAS-vd_of_co_in_the_plane}, 
which in turn is based on the Delaunay  
hierarchy~\cite{devillers-delaunay_hier}, and  which we augment so
that it can efficiently handle the difficulties listed above. 
Our  augmentation of the Voronoi hierarchy may be of interest
to 
incremental constructions of other non-standard types of 
generalized Voronoi diagrams. The expected running time of our algorithm is 
$O(n\log{n}\log{\nclus})$ and the expected space complexity is $O(n)$.
To achieve this time complexity, we also exploit a technique 
by Aronov {et al.}~\cite{aronov06data} to efficiently 
query the static farthest Voronoi diagram of a  given cluster. 

Our algorithm can also be implemented in \emph{deterministic} $O(n)$ space
and $O(n\log^2{n}(\log\log{n})^2)$ expected running time,
using the dynamic point location data structure by
Baumgarten~{et al.}~\cite{Baumgarten}.

In a companion paper~\cite{KP14} we also provide a randomized
incremental construction for the Hausdorff Voronoi diagram, which
avoids point location on a dynamic  data structure, but instead  
maintains a \emph{conflict} or a \emph{history graph}~\cite{Clarkson_rand_sampling_2}. 
The time complexity of both methods is comparable and they can be regarded complementary; 
the choice may simply depend on the availability of an
already existing framework. For example, the 
Delaunay hierarchy is already available in the  CGAL library,\footnote{www.cgal.org} 
therefore, the randomized incremental construction of this paper, based on point location on
such a hierarchical data structure, could
be the method of choice in such an environment.

This paper is organized as follows.
In Section~\ref{sec:prelim} we introduce notation and review known
properties of the Hausdorff Voronoi diagram. In
Section~\ref{sec:algorithm} we describe our randomized incremental construction algorithm. In
Sections~\ref{sec:sep_decomp} and~\ref{sec:vh} we discuss the data
structures to perform various types of point location queries.
Section~\ref{sec:tracing}  provides details for the  tracing of a new Voronoi  region. 
In Section~\ref{sec:compl_analysis} we analyze the complexity of our algorithm. 
We conclude with a brief discussion in Section~\ref{sec:discus}.

\section{Preliminaries}
\label{sec:prelim}

\begin{figure}
\centering
\includegraphics{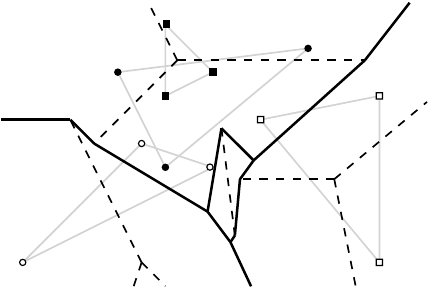}
\caption{The HVD of four clusters}
\label{fig:hvd}
\end{figure}

Let $\AbsFamily$ be a family of  $\nclus$
non-crossing 
clusters of points in the plane such that no 
two clusters have a common point, and let $n=|\cup\AbsFamily|$.
Unless stated otherwise,
we use  $\aClus$ to denote a cluster in $\AbsFamily$, and  $c$ to denote a
point within $C$. Let $\conv C$ denote the convex hull of $C$.
For simplicity of presentation, we follow a general position assumption  that  
no four points  lie on the same circle.
This assumption can be removed similarly to an ordinary Voronoi
diagram of points, e.g., following techniques of symbolic
perturbation~\cite{Seidel}.

The \emph{farthest Voronoi diagram} of $\aClus$, in brief $\FVD(\aClus)$, 
is a partitioning of the plane into regions such that the 
\emph{farthest Voronoi region} of a point $c\in \aClus$~is 
\[\freg_{\aClus}(\pointFromClus)  = \{\aPoint \mid
\forall c' \in \aClus \setminus\{\pointFromClus\} \colon
d(\aPoint,\pointFromClus) > d(\aPoint,c') \}.\]

\noindent
It is well known that $\freg_{\aClus}(c)\neq \emptyset$, if
and only if $c$ is a vertex of $\conv C$.
It is also well known the graph structure of $\FVD(\aClus)$ is
a tree.

Let $\fskel(C) =  \Plane \setminus \bigcup_{c \in C} \freg_C(c)$, if
$|C|>1$;
and let $\fskel(C) = c$, if $C = \{c\}$.
For $|C|>1$, $\fskel(C)$ is a tree corresponding to 
the graph structure of $\FVD(\aClus)$.
We assume that 
$\fskel(C)$ is rooted at a point at infinity 
along an unbounded Voronoi edge.

The \emph{Hausdorff Voronoi region} of 
a cluster $C \in \AbsFamily$ 
is defined as:
\[\hreg_{\AbsFamily}(C)  = \{p \mid \forall C'\in \AbsFamily\setminus\{C\} \colon \df(p,C)
< \df(p,C') \}.\] 
The Hausdorff Voronoi region of a point $c \in C$ is defined as:
\[ \hreg_\AbsFamily(c)  = \hreg_\AbsFamily(C) \cap \freg_C(c).\] 

The partitioning of the plane into 
Hausdorff Voronoi regions is called the \emph{Hausdorff Voronoi diagram}  of
$\AbsFamily$,  for brevity $\HVD(\AbsFamily)$, or simply $\HVD$.
We consider a 
refined version of  $\HVD(\AbsFamily)$,  where each
region $\hreg_{\AbsFamily}(C)$ is further subdivided into the finner
regions $\hreg_{\AbsFamily}(c)$, for $c\in C$. 
Fig.~\ref{fig:hvd} illustrates the Hausdorff Voronoi diagram of
a family of four clusters.
The convex hulls of the input clusters are illustrated in grey lines.
Solid lines indicate the Hausdorff Voronoi edges bounding the Voronoi
regions of individual clusters.
The dashed lines indicate the finer subdivision of a Hausdorff Voronoi region
 $\hreg_{\AbsFamily}(C)$ into  $\hreg_{\AbsFamily}(c)$, $c\in C$.

By the definition of a Hausdorff Voronoi region, $\hreg_{\AbsFamily}(c)=\emptyset$ for any point $c \in C$ that is not a vertex of $\conv C$.
Since such points are not relevant to the Hausdorff diagram, 
we assume 
that all
points in $C$ are also vertices of $\conv C$.

The structure of a Hausdorff Voronoi region is illustrated in  Fig.~\ref{fig:hreg}.
For a point $c\in C$, the boundary of $\hreg_{\AbsFamily}(c)$
consists 
 of two chains: (1)~the 
\emph{farthest boundary}, which is internal to
  $\hreg_{\AbsFamily}(C)$ (i.e., $\bd\hreg_{\AbsFamily}(c) \cap \bd\freg(c)
\subseteq \fskel(C)$); and (2)~the 
\emph{Hausdorff boundary} (i.e., $\bd\hreg_{\AbsFamily}(c) \cap \bd\hreg_{\AbsFamily}(C)$).
Neither chain can be empty, if $\cardinality{C}>1$,~\cite{EP-HVD-d_and_q}.

There are three types of vertices in a Hausdorff Voronoi diagram
(see Fig.~\ref{fig:hreg})~\cite{ep2004algorithmica}:
  (1)~\emph{Pure} Voronoi vertices, 
  which are equidistant to three clusters and appear on Hausdorff
  boundaries;  (2)~\emph{Mixed} Voronoi vertices, which are
  equidistant to three points of two clusters; and 
  (3)~\emph{Farthest}   Voronoi vertices of $\fskel(C)$, which appear
  only on   farthest boundaries.
Mixed Voronoi vertices are the points incident to a Hausdorff and a
farthest boundary.
Mixed vertices that are induced by two points in cluster $C$ and one point of another cluster are
  called $C$-\emph{mixed} vertices.
The farthest boundary of $\hreg_{\AbsFamily}(c)$ meets 
its Hausdorff boundary at 
a $C$-\emph{mixed} vertex, 
or  it extends to infinity, if $\hreg_{\AbsFamily}(c)$ is  unbounded.

Hausdorff Voronoi edges are polygonal lines  
 that connect pure Voronoi vertices and 
 separate the Voronoi regions of different
 clusters (see the solid lines in Fig.~\ref{fig:hvd}). Mixed Voronoi
 vertices correspond to the  breakpoints of these polygonal lines.

  \begin{figure}
  \begin{minipage}{0.48\linewidth}
  \centering 
  \includegraphics{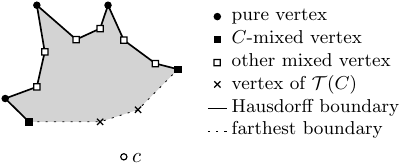}
  \caption{ Features of $\hreg_\AbsFamily(c)$} 
  \label{fig:hreg}
  \end{minipage}
  \hfill
  \begin{minipage}{0.51\linewidth}
  \centering
  \includegraphics{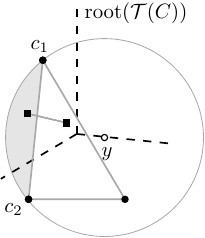}
  \caption{$\cdisk_y$  is partitioned by $\overline{c_1c_2}$ in $\cdisk_y^r$ (shaded)
    and $\cdisk_y^f$.
  The 2-point cluster $P$, illustrated in squares, is
  rear limiting w.r.t. the 3-point cluster $C$. 
}
  \label{fig:hreg&limiting}
\end{minipage}
\vspace*{-1ex}
\end{figure}

A line segment connecting two points in $C$ is called a \emph{chord}
of $C$. 
The closure of a Voronoi region is denoted as $\overline{reg}(\cdot)$.
In the following,  
we list some useful properties of the Hausdorff
Voronoi diagram.
\newpage
\begin{definition}[Rear{/}forward limiting
  cluster~\cite{EP-HVD-d_and_q}, see Fig.~\ref{fig:hreg&limiting}]  
    \label{def:limiting}

\noindent    \begin{itemize}
\item Let $y$ be a point
on an edge $e$ of $\fskel(C)$ induced by $c_1,c_2\in C$, i.e., $d_f(y,C)=d(y,c_1)=d(y,c_2)$.
Point $y$ partitions $\fskel(C)$ into two parts: $\fskel_y^r$ and
$\fskel_y^f$, where $\fskel_y^r$ is the subtree rooted at $y$  
as we traverse $\fskel(C)$  starting at its  root;
$\fskel_y^f$ is the complement of $\fskel_y^r$.

\item Let  $\cdisk_y$  be the disk centered at
$y$ of radius  $d_f(y,C)$. 
Chord $\overline{c_1c_2}$ partitions $\cdisk_y$ in two parts:
$\cdisk_y^r$ and $\cdisk_y^f$, where $\cdisk_y^r$
(shown shaded in Fig.~\ref{fig:hreg&limiting}) is the portion that contains the
points of $C$  that induce $\fskel_y^r$.    

 \item   A cluster $P$ enclosed in $\cdisk_y^r \cup \conv{C}$ (resp., in $\cdisk_y^f \cup \conv{C}$)
is called \emph{rear} (resp., \emph{forward}) limiting for $C$  with
respect to $y$.
\end{itemize}
  \end{definition}
 
\begin{property} [\cite{EP-HVD-d_and_q}]
\label{prop:limiting} 
If cluster $C$ has  a rear  (resp., forward) limiting cluster  $P$
with respect to  $y$ in $\fskel(C)$, 
then the entire $\fskel_y^r$ (resp., $\fskel_y^f$) is closer to
$P$ than to $C$.
\end{property}

Property~\ref{prop:limiting} implies 
Properties 2 and 3.

\begin{property}
Let $C,P \in \AbsFamily$. 
\begin{itemize}
\item[(a)]
\label{prop:connectivity}
 Region $\hreg_{\AbsFamily}(C)$ contains exactly one connected component of
 $\fskel(C)$, unless \mbox{$\hreg_{\AbsFamily}(C) = \emptyset$}.

\item[(b)] \label{prop:vertex}
Let $v$ be a vertex in  $\fskel(C)$. 
If $d_f(v,P) < d_f(v,C)$, then  only one of the subtrees incident to
$v$ may intersect $\hreg(C)$. 
\item[(c)] \label{prop:bisector_wrt_edge}
Let $e=uv$ be an edge in $\fskel(C)$.
If both $u$ and $v$ are closer to $P$ than to $C$, then  
$\hreg_{\AbsFamily}(C)\cap e=\emptyset$.
\end{itemize}
\end{property}

\begin{property} [\cite{EP-HVD-d_and_q}]
\label{prop:killingpair} 
Region ${\hreg_{\AbsFamily}(C)} = \emptyset$ if
and only if
one of the following conditions holds: (1) there is a cluster in $\AbsFamily$ entirely contained in $\conv{C}$; (2)~there is a pair of clusters 
in $\AbsFamily$ such that one is forward limiting and the other is rear limiting with respect to the same point $y \in \fskel(C)$.
\end{property}

The cluster or pair of clusters of Property~\ref{prop:killingpair} is
called a  \emph{killer} or a  \emph{killing pair} of $C$ respectively.

\begin{property} [\cite{EP-HVD-d_and_q}]
\label{prop:star-shaped}
For any point $x \in {\hreg_{\AbsFamily}(c)}$
the line segment $cx \cap \freg_C(c)$ lies entirely in $\hreg_{\AbsFamily}(c)$.
\end{property}

\section{A Randomized Incremental Algorithm}
\label{sec:algorithm}
Let $C_1, \dots, C_{\nclus}$ be a random permutation of
the input family $\InpSet$ of clusters. 
Let $\InpSet_i = \{C_1, \dots, C_i\}$, $1 \leq i \leq {\nclus}$, be the  set of the first $i$
clusters in 
this permutation. 
Our algorithm incrementally computes $\HVD(F_i)$, for \mbox{$1 < i \leq \nclus$}, starting
with $\HVD(F_1) = \FVD(C_1)$.
At step~$i$, we insert cluster $C_i$ in $\HVD(\InpSet_{i-1})$ and derive  $\HVD(\InpSet_{i})$.
To this goal, we identify a  \emph{representative} point
$\rep \in \hreg_{F_i}(C_i)$ or we determine that no such point exists. 
If $\rep$ exists,
we trace the boundary of  $\hreg_{F_{i}}(C_{i})$, and update the diagram to
$\HVD(F_{i})$. Else, 
we conclude that $\hreg_{F_i}(C_i)=\emptyset$  and $\HVD(F_i) =  \HVD(F_{i-1})$. 

The main challenge of our algorithm is to efficiently identify a representative point $\rep$ or conclude that no such point exists. 
Then, the tracing of $\hreg_{F_i}(C_i)$ can be  done similarly to~\cite{EP-HVD-d_and_q}, 
in time proportional to the complexity of the new region, plus the
complexity of the deleted portion of the diagram, times $O(\log{n})$
(see Section~\ref{sec:tracing}). 
In the remaining of this section we focus on identifying a
representative point $\rep$.
We skip the subscript ``$F_i$'' and let $\hreg(C_i)$ stand for $\hreg_{F_{i}}(C_i)$.

Property~\ref{prop:connectivity}a implies three possibilities for $\hreg(C_i)$: 
(1) $\hreg(C_i) \cap \fskel(C_i)$ contains a vertex of $\fskel(C_i)$;
(2) $\hreg(C_i)$ intersects exactly one edge of $\fskel(C_i)$;
and (3) $\hreg(C_i)$ is empty.
The edge of case (2) is called a \emph{candidate edge} (see Definition~\ref{def:candidate}). 
In Fig.~\ref{fig:hvd}, the bounded region in the middle illustrates
case (2), while the three unbounded regions illustrate case (1).

To identify case
(1), it is enough to 
perform point location  of the vertices in $\fskel(C_i)$ in
$\HVD(F_{i-1})$. If any vertex $v$ is found
to be closer to $C_i$ than to its owner in $\HVD(F_{i-1})$, then
$v$ can serve as 
a representative point, i.e., $\rep = v$.
Suppose that no vertex of  $\fskel(C_i)$ satisfies case (1).
Then we look for a \emph{candidate edge} that may satisfy case (2).

\begin{definition}[candidate edge]
  \label{def:candidate} 
Let $uv$ be an edge of $\fskel(C_i)$ and let 
 $Q^u, Q^v $ be the clusters in $F_{i-1}$ closest to $u$
and $v$ respectively. 
Edge $uv$ is called a \emph{candidate edge} if $Q^{u} \neq Q^{v}$ and $uv$ satisfies
the following predicate:
\(
  \cand(uv) \allowbreak = \allowbreak
    \left({\df(u  ,Q^{u  }) < \df(u  , C_{i}) < \df(u  ,Q^{v  })}\right)
     \allowbreak \land \allowbreak
    \left({\df(v  ,Q^{v  }) < \df(v  , C_{i}) < \df(v  ,Q^{u  })}\right)
\).
\end{definition}

\begin{lemma}
\label{lemma:cand}
If there is an edge $uv$ of $\fskel(C_i)$ that is a candidate edge, then either  $\hreg(C_i)$ intersects $uv$ or $\hreg(C_i) = \emptyset$.
\end{lemma}

\begin{proof}
Consider the clusters $Q^u$ and $Q^v$ of
Definition~\ref{def:candidate}. By Property~\ref{prop:vertex}b,
only one of the subtrees of $u$ can be closer to $C_i$ than to $Q^u$.
Because $uv$ is a candidate edge satisfying Definition~\ref{def:candidate},
 $v$ lies in this subtree, 
that is, only the subtree of $u$ that contains $uv$ may intersect
$\hreg(C_i)$. Symmetrically, only the subtree of $v$ that contains
$uv$ may  intersect $\hreg(C_i)$. The intersection of these two
subtrees is exactly the edge $uv$. 
Thus, $\fskel(C_i) \cap \hreg(C_i) \subset uv$, or $\fskel(C_i) \cap \hreg(C_i) = \emptyset$. In the latter case, by Property~\ref{prop:connectivity}(a), $\hreg(C_i) = \emptyset$.
\end{proof}

Lemma~\ref{lemma:cand} implies that, given a candidate edge $uv$,
it suffices to search on $uv$ to identify a representative point.  
Furthermore, if  such a point cannot be found on $uv$, then
$\hreg(C_i) = \emptyset$. 
We can 
search for a representative point as follows:
Traverse $\fskel(C_i)$, starting at its root,
checking vertices and possibly pruning appropriate subtrees
according to Properties~\ref{prop:vertex}b and~\ref{prop:vertex}c. 
During the traversal, either determine $\rep$ as a vertex of $\fskel(C_{i})$, or
determine a candidate edge $uv$, or conclude that
$\hreg(C_{i})=\emptyset$.

When  a candidate edge $uv$ is determined, we still need to identify a 
representative point $\rep$ on $uv$ or determine that $\hreg(C_i) =\emptyset$.
This is achieved by performing a 
\emph{parametric point location query} in  $\HVD(F_{i-1})$ for edge $uv$
 as given in the following definition.

\begin{definition}[Parametric point location query]
      \label{def:param-pl}
       Given a family of clusters $\AbsFamily$, $\HVD(\AbsFamily)$, a cluster $C \not\in \AbsFamily$,
       and a candidate edge $uv \subset \fskel(C)$,  determine
        a cluster $P \in \AbsFamily$ and
        a point $\rep \in uv$ 
         such that $\rep \in \hreg_{\AbsFamily}(P)$ and $\df(\rep, P) = \df(\rep, C)$.         
If such a point does not exist, return \emph{nil}.
\end{definition}

The performance of the  parametric point location query 
induces 
the time complexity of our algorithm.
To perform parametric as well as ordinary point location 
efficiently, we store 
$\HVD(F)$ in a hierarchical data structure, called the Voronoi
hierarchy. This data structure is described in  Section~\ref{sec:vh}
and the parametric point location query is detailed in Section~\ref{sec:vh-parPL}.
The parametric point location requires to also answer 
an additional non-standard  location query on the static farthest Voronoi
diagram of a given cluster,  called a \emph{segment query}. 
A data structure to efficiently answer segment queries 
 is given in the following section. 

\section{Centroid Decomposition}
\label{sec:sep_decomp}
This section describes a data structure, called the 
\emph{centroid
decomposition}, that can efficiently answer queries related to
point location  on a
planar subdivision induced by a tree structure. 
The centroid decomposition was introduced by Megiddo et al.~\cite{megiddo-centroid-decomp}, 
and we use it 
 in this paper to efficiently
perform  \emph{segment queries} 
on the farthest Voronoi diagram of a cluster. 
Segment queries are used during parametric point location in the Hausdorff Voronoi diagram. 
The query is defined as follows.  

\label{sec:sd-segm}
\begin{definition}[Segment query]
  \label{def:sd-seg} 
Consider two clusters $C,P$.
Given a segment $uv \subset \fskel(C)$ 
such that $\df(u,C) < \df(u,P)$ and $\df(v,C) > \df(v,P)$,
find the point $x \in uv$ that is equidistant from $C$ and $P$, i.e., $\df(x, C) = \df(x,P)$. (See Fig.~\ref{fig:sd-test}b.)
\end{definition}

\begin{figure}
\begin{minipage}{0.49\linewidth}
\centering
\includegraphics{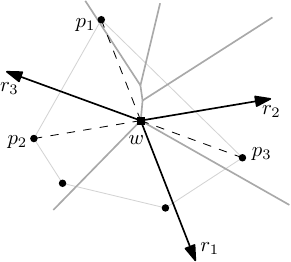} \\
(a)
\end{minipage}
\begin{minipage}{0.49\linewidth}
\centering
\includegraphics{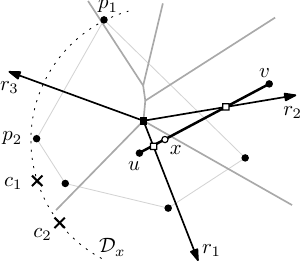} \\
(b)
\end{minipage}

\caption{A cluster $P$ (disks), $\FVD(P)$ (thick grey lines) and (a) the rays $r_i$ corresponding to vertex $w$; (b) the segment  query for $uv$ outputs point  $x$}
\label{fig:sd-test}
\end{figure}

Let $P$ be a cluster of points and let $\Sep(P)$ denote the 
centroid
decomposition for $\FVD(P)$.
We first define $\Sep(P)$, and then we describe how to use it to
perform  an ordinary  point location query on $\FVD(P)$.
Finally, we adapt the point location query to perform a segment query
on $\FVD(P)$.
\deleted{
We start with defining the 
centroid
decomposition for $\FVD(P)$, denoted as $\Sep(P)$.
We then describe an ordinary  point location query in $\FVD(P)$ using
$\Sep(P)$,  
and finally we adapt the point location query to perform a segment query
on $\FVD(P)$.
}

\paragraph{Definition of $\Sep(P)$.} Any tree
with $h$ vertices has a vertex $v$, called the 
\emph{centroid},  whose removal decomposes the tree into
subtrees of at most  $h/2$ vertices each~\cite{megiddo-centroid-decomp}. 
Exploiting this fact, we 
build $\Sep(P)$ as a balanced  tree
whose nodes correspond to Voronoi vertices of $\FVD(P)$ as follows:

\begin{itemize}
  \item
Find the centroid $w$ of  $\fskel(P)$. 
Create a node for $w$ and assign it as the root of $\Sep(P)$. 
\item 
Remove $w$ from $\fskel(P)$.  Recursively
build the 
centroid decomposition 
trees for  
the connected components of $\fskel(P)$, which are  incident to $w$,
and  
link them as subtrees of $w$. 
\end{itemize}

\paragraph{Point location on $\FVD(P)$ using $\Sep(P)$.}
Given a query point $q$, we perform {point location} 
as follows. Starting from the root,
we traverse  $\Sep(P)$, testing $q$  against the current node of
$\Sep(P)$, 
as explained in the next paragraph.
Every time 
we choose one of  
the node's subtrees to continue, until a leaf node is reached. 
Among the three points in $P$ inducing the Voronoi vertex of the leaf node,
we choose the one farthest from $q$.
Then, 
point $q$ belongs in the farthest Voronoi region of the chosen point.

Testing $q$ against a node $\alpha$ of $\Sep(P)$  
is performed  following
Aronov~{et al.}~\cite{aronov06data}.
In particular, let $w$ be  the Voronoi vertex  of $\FVD(P)$ 
corresponding to 
node $\alpha$. 
Let $p_1, p_2, p_3\in P$ 
have farthest Voronoi regions incident to $w$.
Consider the rays $r_i, i =
1,2,3,$ originating at $w$ and having direction
$\overrightarrow{p_iw}$ respectively (see Fig.~\ref{fig:sd-test}a).
The rays $r_i$ subdivide the plane into three sectors.
Among the subtrees of $\Sep(P)$ incident to $\alpha$,
pick the one that corresponds to the sector containing $q$.
Correctness 
is implied by 
the construction of $\Sep(P)$ and  the following lemma.

\begin{lemma}
\label{lemma:sectors}
Rays $r_i, i=1,2,3,$ subdivide the plane
into three sectors, where each sector contains
 exactly one connected component of
$\fskel(P)\setminus\{w\}$. 
\end{lemma}
\begin{proof}
It is well known
that for any point  $t \in \freg_P(p)$, $p\in P$, 
the ray originating at $t$
and having direction  $\overrightarrow{pt}$  
is entirely contained in
$\freg_P(p)$. Thus, no ray $r_i$ can intersect the edges of $\fskel(P)$. 
Since the rays $r_i$ lie in three distinct regions of $\FVD(P)$, there
is a component of 
$\fskel(P)\setminus\{w\}$ in each of the three sectors formed by 
these three rays. The claim follows.
\end{proof}

\paragraph{Segment query on $\FVD(P)$ using $\Sep(P)$.}
The segment query can be performed similarly 
to a point location query within the same time complexity.   
The difference is in the testing of 
segment $uv$ against a node $\alpha$ of $\Sep(P)$.
Let $w$ be the Voronoi vertex of $\FVD(P)$ 
corresponding  to $\alpha$.
Let rays  $r_i, i=1,2,3$, emanate from $w$ as defined above 
(see Fig.~\ref{fig:sd-test}b).

Consider the (at most two) intersection points of the rays with $uv$.
If any of these points 
is equidistant to $C$ and $P$, return it.
Otherwise, these intersection points break $uv$ into (at most  three) subsegments, each lying in one of the three sectors formed by the rays $r_i$. 
Among these subsegments, pick subsegment $u'v'$ such that
$\df(u',C) < \df(u',P)$ and $\df(v',C) > \df(v',P)$, and continue with the 
child of $\alpha$ in $\Sep(P)$ whose Voronoi vertex lies in the same  sector as $u'v'$.

If $\alpha$ is a leaf of $\Sep(P)$, 
let $e$ be the edge of $\fskel(P)$ incident to the vertex $w$ that
lies in the same  sector as $u'v'$.
Let $p_1, p_2$  be the points in $P$ that induce the edge $e$, and 
let $c_1, c_2$ be the points in $C$ that induce $uv$.
Since $\df(v,P) < \df(v,C) = \df(v,c_1) = \df(v,c_2)$,
the closed disk centered at $v$ and passing through $c_1, c_2$
must contain both $p_1$ and $p_2$.
Since $C$ and $P$ are non-crossing, both $p_1$ and $p_2$ lie to the same side of the chord $\overline{c_1c_2}$.
Thus, one of the two closed disks defined by points $p_1, c_1, c_2$ or by points  $p_1, c_1, c_2$
must contain both $p_1$ and $p_2$.
Return as an answer to the segment query the center of this disk.
In Fig.~\ref{fig:sd-test}b, such a disk $\cdisk_x$ has
a dotted arc
on its boundary and its center $x$ is shown as an unfilled circle.

\begin{lemma}
\label{lemma:sepdecomp}
  The 
  centroid decomposition $\Sep(P)$ of a cluster $P$
  can be built in $O(n_p\log{n_p})$ time, where $n_p$ is the number of
  vertices of $\FVD(P)$.
  Both the point location and the segment query in $\Sep(P)$
require $O(\log{n_p})$ time.
\end{lemma}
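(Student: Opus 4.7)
The plan is to establish the three claims—construction time, point-location query time, and segment-query time—separately; all three reduce to combining a constant-time test per node of $\Sep(P)$ with an $O(\log n_p)$ bound on the depth of $\Sep(P)$.

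For the construction, I would invoke the standard centroid-decomposition analysis: the tree $\fskel(P)$ has $\Theta(n_p)$ nodes, a centroid can be found in linear time, and removing it splits the tree into at most three subtrees each of size at most $n_p/2$. The recurrence $T(n_p) \le T(a_1) + T(a_2) + T(a_3) + O(n_p)$ with $a_j \le n_p/2$ solves to $T(n_p) = O(n_p \log n_p)$, and simultaneously yields the $O(\log n_p)$ depth bound on $\Sep(P)$.

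For the point-location query, at each internal node $\alpha$ corresponding to a vertex $w$ of $\FVD(P)$ with incident regions of $p_1, p_2, p_3$, I would first note that the rays $r_i$ from $w$ in the directions $\overrightarrow{p_i w}$ lie inside $\freg_P(p_i)$ respectively (a standard consequence of the halfplane structure of farthest-point Voronoi regions). These rays partition the plane into three sectors, each of which contains exactly one connected component of $\fskel(P) \setminus \{w\}$. An $O(1)$ orientation test identifies the sector containing $q$ and hence the subtree to descend into; the total cost is $O(\log n_p)$ by the depth bound. For the segment query the same three rays subdivide $uv$ into at most three subpieces, each contained in a single sector. Within a sector $\freg_P(p_i)$ the function $\df(\cdot, P)$ coincides with the Euclidean distance to the single point $p_i$; and since $uv$ bisects a chord $\overline{c_1c_2}$ of $C$, $\df(\cdot, C)$ equals $d(\cdot, c_1)$ on all of $uv$. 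Hence on each subpiece the signed difference $d(\cdot, c_1) - d(\cdot, p_i)$ is the difference of distances to two fixed points, whose zero set is a line meeting $uv$ in at most one point. The precondition gives opposite signs at $u$ and $v$, and (as argued below) $\df(\cdot,C) - \df(\cdot,P)$ changes sign exactly once on $uv$, so exactly one subpiece has opposite signs at its endpoints and we recurse into it. At a leaf, the current $u'v'$ still bisects $\overline{c_1c_2}$ while the associated edge of $\fskel(P)$ bisects some chord $\overline{p_1p_2}$; the equidistant point is then the center of the circle through $c_1, c_2$ and whichever of $p_1, p_2$ is farther from it, computable in $O(1)$. The depth bound again yields $O(\log n_p)$ total time.

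The step I expect to be the main obstacle is the claim that on the straight segment $uv$ the function $\df(\cdot, C) - \df(\cdot, P)$ has a unique sign change. This is where the non-crossing (pseudodisk) hypothesis is indispensable: without it the bisector of $C$ and $P$ could cross $uv$ multiple times, the recursion would have to branch, and the logarithmic query bound would break. I would derive uniqueness from the pseudodisk structure of $\conv C$ and $\conv P$, together with the observation that on $uv$ the distance $\df(\cdot, C)$ is already pinned to $d(\cdot, c_1)$, so in each sector the sign-change equation reduces to equality of distances to two fixed points, which has at most one solution on the line carrying $uv$; the pseudodisk condition rules out compensating sign changes when crossing between sectors.
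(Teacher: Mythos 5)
Your proposal matches the paper's argument essentially step for step: the centroid-decomposition recurrence giving $O(n_p\log n_p)$ construction time and $O(\log n_p)$ depth, the three-ray sector test of Aronov et al.\ for point location, and the same per-node segment test (intersect $uv$ with the rays, recurse into the unique subsegment with opposite signs of $\df(\cdot,C)-\df(\cdot,P)$ at its endpoints, and finish at a leaf with the center of the circle through $c_1$, $c_2$ and the farther of $p_1$, $p_2$). The unique-sign-change claim you flag as the main obstacle is handled in the paper at exactly the same level of detail --- it is asserted directly from the non-crossing hypothesis (``since $P$ and $C$ are non-crossing, there is exactly one subsegment $u'v'$\,'') --- so your sketch is in line with the paper's own treatment.
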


\begin{proof}
Given a subtree of $\fskel(P)$, its centroid can be
computed in $O(h)$ time~\cite{megiddo-centroid-decomp}, where $h$ is
the number of vertices in this subtree. Building $\Sep(P) $ requires
to recursively compute the centroids of its subtrees, each of size at
most  half the size of $P$. This implies an $O(n_p\log{n_p})$ total time to build $\Sep(P)$.

The point location query consists of  $O(\log{n_p})$ tests of a query point against a node of $\Sep(P)$. 
Each test involves a constant number of points and rays, thus, it can be performed in constant time. 
The same argument works for the test of a segment against a node of
$\Sep(P)$ during a segment query, which implies the same
$O(\log{n_p})$ time bound. 
\end{proof}

\section{The Voronoi Hierarchy for the Hausdorff Voronoi Diagram}
\label{sec:vh}
We describe a randomized 
semi-dynamic data structure to store the Hausdorff Voronoi
diagram,  
which supports
insertion of a cluster, and
point location 
queries (both ordinary and parametric). 
It augments
the Voronoi  hierarchy~\cite{Boissonat-curved_vd,KARAVELAS-vd_of_co_in_the_plane} with the
ability to handle the generalized Voronoi features present in the
Hausdorff diagram, 
such as sites of non-constant complexity, sites that are not
entirely contained in their regions,  and empty Voronoi
regions. 
The Voronoi hierarchy is inspired by the  Delaunay
hierarchy~\cite{devillers-delaunay_hier} 
that yields an optimal
randomized incremental construction of the Delaunay triangulation.
The  Delaunay hierarchy 
can be  considered 
a 2-dimensional version of the skip lists of Pugh~\cite{pugh}, 
\cite{Boissonat-curved_vd}.
We refer to our adaptation as the 
\emph{Hausdorff Voronoi hierarchy}.

\begin{definition}[\cite{KARAVELAS-vd_of_co_in_the_plane,devillers-delaunay_hier}]
The \emph{Voronoi hierarchy} of a set of sites $S$ is a sequence
of Voronoi diagrams, 
 $\VD(S^{(\ell)}), \ell = 0,\ldots,h$, where the sets $S^{(\ell)}$ form a hierarchy of subsets 
of $S$ 
built  as follows. 
$S^{(0)} = S$, and for $\ell = 1, \ldots, {\nlevs}$, 
$\SiteSet^{(\ell)}$ is a  random sample of 
$\SiteSet^{(\ell-1)}$ following a Bernoulli distribution with a fixed constant
 parameter $\beta \in (0,1)$. 
 We refer to $\ell$ as ``level of the Voronoi hierarchy''. 
\end{definition}

To perform \emph{point location} for a query
point $q$ 
in the Voronoi hierarchy, we start at the last level $h$, and for each level $\ell$,
we determine the site $s^\ell \in S^{(\ell)}$ that is closest to $q$
 by performing a \emph{walk}.
Each step of the walk moves from the current site to one of its neighbors 
such that the distance to $q$ is reduced.
To determine an appropriate neighbor,  binary search may be
used~\cite{KARAVELAS-vd_of_co_in_the_plane}.
A walk at level $\ell$ starts at  $\site^{\ell+1}$.
The answer to the query is $\site^0$.

\paragraph{}
For the Hausdorff Voronoi diagram, a first difference to consider is
that clusters are not of constant complexity and 
that $n$ can be $w(k)$.
Recall that $k$ is number of sites (clusters) and $n$ is their total complexity $(k\leq n)$.

Nevertheless, the following lemma shows that the complexity of the Hausdorff Voronoi hierarchy 
remains 
expected-$O(n)$ as for the original
hierarchy~\cite{devillers-delaunay_hier,KARAVELAS-vd_of_co_in_the_plane}.

\begin{lemma}
   \label{lemma:vh-size}  
Consider the Voronoi diagram of a family of $k$ sites of total complexity
$n$, where the size of the diagram is also $O(n)$.
Then the Voronoi hierarchy 
for such diagram
has  expected size $O(n)$ and 
    expected number of levels $O(\log{{\nclus}})$.
 \end{lemma}
 
 \begin{proof}
  Let $\complexity{\SiteSet^{(\ell)}}$ 
 denote the total complexity of the sites at a level $\SiteSet^{(\ell)}$.
  For any site $s \in \SiteSet$, the probability that  $s$ appears in 
  $\SiteSet^{(\ell)}$ is $\beta^\ell$.
  Then, 
 the expectation of $\complexity{S^{(\ell)}}$ is 
  $\E[\complexity{\SiteSet^{(\ell)}}] = \beta^\ell \complexity{\SiteSet}
  = \beta^\ell n$,
  and the expected size of the Voronoi diagram at
  level $\ell$ is $O(\beta^\ell n)$.
  The expected size of the hierarchy is
  \[ \sum_{\ell = 0}^{\infty} O(\E[\complexity{\SiteSet^{(\ell)}}]) =
  \sum_{\ell = 0}^{\infty} O(\beta^\ell n) =
\frac{1}{1-\beta}O(n) = O(n).\]
The bound on the expected number of levels follows immediately
from properties of the Bernoulli distribution
\cite{devillers-delaunay_hier,KARAVELAS-vd_of_co_in_the_plane}.
\end{proof}

To adapt the Voronoi hierarchy for the Hausdorff Voronoi diagram, several difficulties have to be addressed. 
When performing a walk at a level $\ell$ of the hierarchy, at 
each step we need to reduce  the distance 
between the current cluster $C$ and the query point $q$.  However, the farthest  distance
$\df(q,C)$ may be realized by a point 
$c\in C$ that has 
an empty Voronoi region at level $\ell$.
Thus, instead of $\df(\cdot,\cdot)$, we base the walk on a slightly
different distance function, which reflects the diagram better, 
 and
which equals    $\df(\cdot,\cdot)$ at the end of the walk. 

In addition, the neighbors of a Hausdorff Voronoi region do not have a natural
ordering, and thus, it is not easy to use binary search when performing one step in the walk.
To address these problems, we redesign 
one step of the walk in Section~\ref{subsec:one-step}. 
Then, point location can be  performed as in the ordinary
hierarchy. In Section~\ref{sec:vh-parPL}, 
 we describe the
parametric point location query
that is needed for our algorithm.
Empty Voronoi regions in the Hausdorff diagram pose another major difficulty when updating the
 hierarchy because they complicate the  transition
between the hierarchy levels.
In Section~\ref{subsec:update-vh}, we show how to maintain the hierarchy and deal with regions that
become empty.

In our modified hierarchy, 
a walk at level $\ell$ does not necessarily start from the same cluster where it stopped at level $\ell +1$, 
 but possibly from another cluster that is closer to $q$. The following lemma 
shows that the 
expected 
length of the walk on one level of the Hausdorff Voronoi
hierarchy is constant. It is a simple modification of
 \cite[Lemma~9]{KARAVELAS-vd_of_co_in_the_plane}.

\begin{lemma}
  \label{lemma:walkconstexp}%
  Let $\site_0^\ell, 
  \ldots, \site_r^\ell = \site^{\ell}$ 
  be the sequence of
  sites visited at level $\ell$ during  point
  location for  a query point $q$.
  Assuming that $\df(q, \site_i^\ell) < \df(q, \site_{i-1}^{\ell})$,
   \mbox{$i = 1,\dots,r$}, and either $\site^{\ell+1}=\site_0^\ell$, or 
  $\df(q, \site_0^\ell) < \df(q, \site^{\ell+1})$,
  the expectation of the length $r$ of the walk at level $\ell$ is
  constant. 
\end{lemma}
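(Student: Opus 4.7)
The strategy is a backwards-analysis argument in the spirit of Devillers' analysis of the Delaunay hierarchy~\cite{devillers-delaunay_hier}: one conditions on the content of level $\ell$, relates the walk length to the rank of $\site^{\ell+1}$ among the sites of $\SiteSet^{(\ell)}$ sorted by farthest distance to $q$, and then exploits the Bernoulli definition of the hierarchy to show that this rank is a geometric random variable.

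Concretely, I would list the sites of $\SiteSet^{(\ell)}$ in increasing order of farthest distance to $q$ as $t_1, t_2, \ldots, t_m$, so that $t_1 = \site^\ell$. By construction, each $t_k$ independently belongs to $\SiteSet^{(\ell+1)}$ with probability $\beta$, so the smallest index $J$ with $t_J \in \SiteSet^{(\ell+1)}$ satisfies $\Pr[J > k] = (1-\beta)^k$ and $\E[J] = 1/\beta$. Since $\site^{\ell+1}$ is by definition the nearest site of $\SiteSet^{(\ell+1)}$ to $q$, we have $\site^{\ell+1} = t_J$. The hypothesis gives $\df(q,\site_0^\ell) \le \df(q,\site^{\ell+1}) = \df(q,t_J)$ in both cases, and the strict monotone progress $\df(q,\site_i^\ell) < \df(q,\site_{i-1}^\ell)$ for $i \ge 1$ then places the entire sequence $\site_0^\ell,\ldots,\site_r^\ell$ inside $\{t_1,\ldots,t_J\}$, with pairwise distinct elements. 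Therefore $r+1 \le J$, and taking expectations (via the tower property, conditioning on $\SiteSet^{(\ell)}$) yields
\[
\E[r] \;\le\; \E[J]-1 \;=\; \frac{1-\beta}{\beta} \;=\; O(1).
\]

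The only delicate point is the conditioning, and I do not expect it to be a real obstacle: once $\SiteSet^{(\ell)}$ is fixed, the walk at level $\ell$ is a deterministic function of $\site^{\ell+1}$, while the remaining randomness — the independent Bernoulli inclusion of each site of $\SiteSet^{(\ell)}$ into $\SiteSet^{(\ell+1)}$ with parameter $\beta$ — is precisely what makes $J$ geometric. Nothing in the argument uses that the underlying distance is the usual Euclidean distance or that sites have constant complexity; it only uses the monotone-progress property of the walk and the independent sampling, so the analysis transfers without change from the classical setting to our farthest-distance setting over point-clusters of non-constant size.
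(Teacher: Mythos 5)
Your proposal is correct and follows essentially the same route as the paper: the paper's auxiliary Lemma~\ref{lemma:tth-neighbor} is exactly your observation that the rank of $\site^{\ell+1}$ among the sites of $\SiteSet^{(\ell)}$ ordered by farthest distance to $q$ is geometric with parameter $\beta$ (hence expectation $1/\beta$), and the monotone decrease of $\df(q,\cdot)$ along the walk then confines all visited sites to those closer than $\site^{\ell+1}$, bounding $\E[r]$ by a constant. Your packaging via the stopping index $J$ is just a slightly cleaner restatement of the same computation.
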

  
  \begin{proof} 
  Each site visited by the walk at level $\ell$
    is closer to $q$ than $\site^{\ell+1}$, and thus does not belong to level $\ell+1$.
 The probability that there are $t$ such sites is $\beta(1-\beta)^{t-1}$. 
Thus, the expected number of sites visited at level $\ell$ is at most
  \[     \sum\limits_{t=1}^{\cardinality{\SiteSet^{(\ell)}}}{t(1-\beta)^{t-1}\beta}
     <
      \beta\sum\limits_{t=1}^{\infty}{t(1-\beta)^{t-1}} = \frac{1}{\beta}. \] 
  \end{proof}

\subsection{One step of the walk}
\label{subsec:one-step}

Let $\ell \in \{0,\dots,h\}$  be a level in the Hausdorff Voronoi
hierarchy of $\AbsFamily$. Let
$\hregl(\cdot)$ denote $\hreg_{\AbsFamily^{(\ell)}}(\cdot)$ 
and let $\hreglc(\cdot)$ denote the closure of this region.

A point $c\in C$ is called \emph{active} if
$\hregl(c) \neq \emptyset$.
Let $\hat{\curClus}$ denote the set of all active points in a cluster $C$.
The walk to locate a query point $q$ uses  
the farthest distance to the active points of a cluster $C$ 
as opposed to the farthest
distance to all points 
of $C$.
One step of the walk is defined as follows.

\begin{definition}[a step of the walk at level $\ell$]
\label{def:onestep}
Given a query point $q$ and a cluster $\curClus \in
\AbsFamily^{(\ell)}$ such that $q\not\in   
\hreglc(\curClus)$,
determine $Q \in \AbsFamily^{(\ell)}$ such that 
 $\hregl(Q)$ is adjacent to $\hregl(C)$  and $\df(q, \hat{Q}) < \df(q, \hat{\curClus})$.
If $q \in  \hreglc(\curClus)$ then $Q=C$.
\end{definition}
 
\begin{figure}
\begin{minipage}{0.49\linewidth}
  \centering
  {\includegraphics  {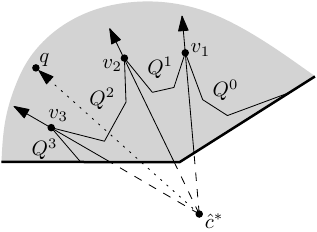}}
  \caption{One step of a walk  for a query point $q$ and a starting cluster $C$}  
\label{fig:regionpart}
\end{minipage}
\hfill
\begin{minipage}{0.49\linewidth}
\centering
\includegraphics {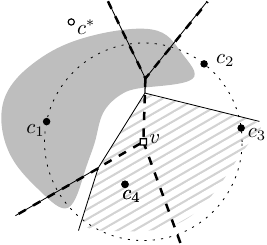}
\caption{Illustration of the proof of Lemma~\ref{lemma:tangents}}
\label{fig:tangents}
\end{minipage}
\end{figure}

To perform one step 
of the walk we use the set of active points $\hat{\curClus}$.
We store  $\hat{\curClus}$ as a circular list  of its points 
in the order of its convex hull. Each point $c \in \hat{\curClus}$
has a link to  the ordered list  of pure
Voronoi vertices  $v_1,\ldots,v_j$  on the boundary of $\hregl(c)$. 
(Recall from Section~\ref{sec:prelim} that pure Voronoi vertices are equidistant to three different
clusters.) Let $Q^0,\ldots,Q^{j+1}$ be  the corresponding list of clusters whose 
Voronoi regions are adjacent to $\hregl(c)$ (see Fig.~\ref{fig:regionpart}).
We determine cluster $Q$ by binary search in these lists.
The detailed algorithm is given in Procedure~\ref{alg:step}.

For the rest of this section, let $c^*$ (resp.,  $\hat{c}^*$) be the 
 point in $C$ (resp., in $\hat{C}$) that is farthest from $q$ ($\df(q,c^*) = \max_{c \in C}{\df(q,c)}$ 
 and $\df(q,\hat{c}^*) = \max_{c \in \hat{C}}{\df(q,c)}$).
Let $c_1,c_2 \in \hat{C}$ be 
the active points immediately following and preceding $c^*$ respectively
on the boundary of $\conv{C}$.
In the following, we use $c_1$ and $c_2$ to determine $\hat{c}^*$.

\begin{algorithm}[H]
  \caption{A step 
  of the walk at level $\ell$}
  \label{alg:step}
  \begin{algorithmic}[1]
 \State Determine ${c}^*$ by locating $q$ in $\FVD(C)$.
 \If {$c^* \in \hat{C}$}
 \State{ Let $\hat{c}^* = c^*$.} 
 \Else \State{Let 
$\hat{c}^*$ be the point in $\{c_1, c_2\}$ that is the farthest from $q$.} (See Lemma~\ref{lemma:tangents})
\EndIf
\State Let $Q = Q^i$
 such that ray $\overrightarrow{\hat{c}^*q}$ follows 
$\overrightarrow{\hat{c}^*v_i}$ and/or precedes
$\overrightarrow{\hat{c}^*v_{i+1}}$. (See Fig.~\ref{fig:regionpart})
\If {$\df(q, \hat{Q}) < \df(q, \hat{\curClus})$}
\State \Return $Q$.
\Else 
 \State \Return $C$.  
 \EndIf
  \end{algorithmic}
\end{algorithm} 
 
To establish the correctness of Procedure~\ref{alg:step} we need to
prove correctness for Lines~5 and 6--9.
The following lemma shows that $\hat{c}^*\in\{c_1,c_2\}$ (if $c^*$ is
not active), and thus, it establishes the 
correctness of Line~5.

\begin{lemma}
\label{lemma:tangents}
Assuming  ${c}^* \not\in \hat{C}$,
$\freg_{\curClus}(c^*) \subset \freg_{\hat{\curClus}}(c_1)
  \cup \freg_{\hat{\curClus}}(c_2)$.
\end{lemma}

\begin{proof}
Let $C' = \hat{C}\cup\{c^*\}$.  Since $C' \subseteq C$,
$\freg_{C}(c^*) \subseteq \freg_{C'}(c^*)$. Thus 
it is enough to prove that
$\freg_{C'}(c^*) \subset \freg_{\hat{C}}(c_1) \cup \freg_{\hat{C}}(c_2)$.

Suppose for the sake of contradiction that there is $c_3 \in
  \hat{C} \setminus \{c_1,c_2\}$ such that 
 $\freg_{{C'}}(c^*) \cap \freg_{\hat{\curClus}}(c_3) \neq \emptyset$  (see Fig.~\ref{fig:tangents}).
Then $\freg_{{\curClus'}}(c^*)$  (shown striped in Fig.~\ref{fig:tangents}) 
has at least three neighbors in $\FVD({C'})$, which implies that
 $\freg_{C'}(c^*)$ contains at least one vertex $v$ of $\fskel(\hat{C})$.
Fig.~\ref{fig:tangents} shows  $\hregl(C)$ as a grey area,
$\fskel(\hat{C})$ in  thick dashed lines, 
 and  $\fskel({C'})$ in thin  solid lines. 
Since all points in $\hat{C}$ have non-empty Voronoi regions in
$\HVD_F^{(\ell)}(C)$, all vertices of $\fskel(\hat{C})$ must be contained in 
 $\hreg^{(\ell)}_{F}(C)$.
Thus, $v \in \hreg^{(\ell)}_{F}(C)$, which implies that $c^*$ is
active; a contradiction.
\end{proof}

\begin{figure}
\begin{minipage}{0.49\linewidth}
\centering
\includegraphics{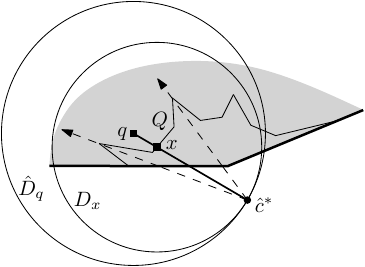}
\\
(a)
\end{minipage}
\begin{minipage}{0.49\linewidth}
\centering
\includegraphics{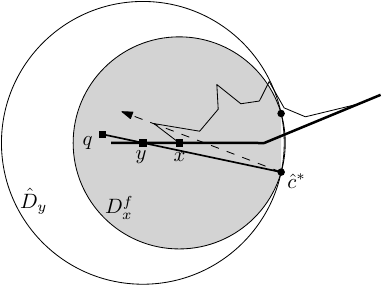}
\\
(b)
\end{minipage}
\caption{Two cases of the proof of Lemma~\ref{lemma:d-reduced}: (a) $\hat{c}^*q$ intersects
the Hausdorff boundary of $\hat{c}^*$; (b) $\hat{c}^*q$ does not intersect it}
\label{fig:d-reduced}
\end{figure}

\begin{lemma}
\label{lemma:d-reduced}
Let $Q $ be the cluster determined at Line~6 of Procedure~1.
Then $\df(q, \hat{Q}) < \df(q, \hat{\curClus})$ if and only if 
$q\not \in \hreglc(\curClus)$.
\end{lemma}

\begin{proof}
Suppose $q \in\hreglc(\curClus)$.
Let $\hat{F} = \{\hat{P}, P \in F^{(\ell)}\}$ be the family of sets of active points of all
clusters in $F^{(\ell)}$.  
Clearly, $\HVD(\hat{F})$ is identical to $\HVD(F^{(\ell)})$. 
Since  $\hreg_{F}^{(\ell)}(C) = \hreg_{\hat{F}}(\hat{C})$ it follows
that $q \in \overline{\hreg}_{\hat{F}}(\hat{C})$. Therefore, 
 $\df(q,\hat{C}) \leq
\df(q,\hat{Q})$.

Suppose $q \not\in \hreglc(C)$. Let $\hat{D}_q$ 
be the closed disk centered at $q$ with radius $|\hat{c}^*q|$. 
We prove    that $\hat{Q}$ is enclosed in $\hat{D}_q$, which is
equivalent to $\df(q, \hat{Q}) < \df(q, \hat{\curClus})$.
There are two cases: 

\begin{enumerate}
      \item
Suppose that
segment 
$\hat{c}^*q$ intersects the Hausdorff boundary of
$\hregl(\hat{c}^*)$ at a point $x$.
(By Property~\ref{prop:star-shaped}, ${\hat{c}^*q}$ may intersect this
boundary at most once). Let $\cdisk_x$ be the closed disk centered at $x$ with radius $\df(x,C)$.
Since $x$ lies on the boundary between $\hregl(C)$ and $\hregl(Q)$,  $\df(x, C) = \df(x,Q)$.
        Equivalently,  $Q \subset \cdisk_x$. 
Since $q \not\in \hregl(\hat{c}^*)$, then $\df(q,\hat{Q}) < \df(q,\hat{C}) = |q\hat{c}^*|$. 
Thus $\hat{Q}$ is enclosed in $\hat{D}_q$. See Fig.~\ref{fig:d-reduced}a.

\item 
 Suppose that $\hat{c}^*q$ does not intersect
  the Hausdorff boundary of $\hregl(\hat{c}^*)$. 
  Since  $q \not\in \hregl(C)$,  the situation is as in  Fig.~\ref{fig:d-reduced}b.
  Let $y$ be a point where 
  ${\hat{c}^*q}$ intersects $\bd \freg_{ \hat{\curClus}}(\hat{c}^*)$; let $e$ 
  be the edge of $\bd \freg_{ \hat{\curClus}}(\hat{c}^*)$ that contains $y$, and let $x$ be 
  the $C$-mixed Voronoi vertex encountered first as we traverse $\bd \freg_{ \hat{\curClus}}(\hat{c}^*)$ from $y$ towards $\hregl(C)$.
 Note that $Q$ is the cluster whose Voronoi region is incident to $x$.
  Cluster 
  $\hat{Q}$ is either rear or forward limiting with respect to  $\hat{\curClus}$ (see Definition~\ref{def:limiting}).
   Without loss of generality, let $\hat{Q}$ be forward limiting, that is, 
  $\hat{Q} \subset \cldf_x \cup \conv \hat{\curClus}$. 
Since $y \not\in \hregl(C)$ and $y \in \overline{\freg_{\hat{C}}}(\hat{c}^*)$, 
then  $\cldf_x \cup \conv \hat{\curClus} \subset \hat{D}_y \subset \hat{D}_q$, where $\hat{D}_y$
the closed disk centered at $y$ with radius $\df(y, \hat{C})$.  
  Thus, $\hat{Q}$ 
  is enclosed in $\hat{D}_q$. 
  \end{enumerate} 
\end{proof}

\begin{lemma}
\label{lemma:step}
One step of the  walk is performed in $O(\log{n})$ time.  
\end{lemma}
\begin{proof}
We analyze the time complexity of Procedure~1. 
 In Line~1, point ${c}^*$ is determined  by locating $q$ in $\FVD(C)$ in $O(\log{n})$ time. 
 In Line~5, points $c_1,c_2$ are determined in time $O(\log{n})$ by drawing
the tangents from $c^*$ to $\hat{C}$, which is a convex chain.
In Line~6, cluster $Q$ is found by binary
search in the list $v_1,\ldots,v_j$.
Thus, all steps are performed within time  $O(\log n)$. Correctness is
established by Lemmas \ref{lemma:tangents} and \ref{lemma:d-reduced}.
\end{proof}

\subsection{Parametric point location in the Voronoi hierarchy}
\label{sec:vh-parPL} 
In this section we show how to perform  parametric point location
on a candidate edge $uv$ of $\fskel(C)$.
 Recall from Section~\ref{sec:algorithm} the definition of a candidate edge
 (Definition~\ref{def:candidate}) and of a parametric point 
 location query (Definition~\ref{def:param-pl}).
 
We follow the same top-down traversal of the hierarchy, as for
the ordinary point location. Starting  at the last level ${\nlevs}$ of the hierarchy,
at each level $\ell$, we search for 
  a cluster $Q^{\ell} \in \AbsFamily^{(\ell)}$ and a point
  $u^{\ell} \in uv$  such that  $u^{\ell} \in \hregl(Q^{\ell})$ and
$\df(u^{\ell},\newc) = \df(u^{\ell}, Q^{\ell})$. 
The answer to the query is the cluster $C^0$ and the point $u^0$ of level $0$.
If at any level $\ell$ we find out that 
the desired cluster $Q^{\ell}$ or 
point $u^\ell$ do not exist, the answer to the parametric point location query is \emph{nil}.
At level $\ell$, we determine a sequence 
$(a_j)_{j=0}^r$ of points on the line segment $uv$, such that $a_0 = u^{\ell+1}$  and 
$a_r =  u^\ell$. Let $Q^{a_j}$,  $j = 1,\ldots, r$, 
denote the cluster in $F^{(\ell)}$  such that $a_j \in \hregl(Q^{a_j})$. 
For each $j = 0, \ldots, r$   point $a_{j+1}$ is 
derived from $a_j$ so that it is   
equidistant to $Q^{a_j}$ and $C$ ($\df(a_{j+1}, C) = \df(a_{j+1}, Q^{a_j})$). 

The algorithm to perform
parametric point location is given in Procedure~\ref{alg:par-pl}. 
At level $\ell$, we determine $a_{j+1}$ from $a_j$ by performing a
walk starting at $Q^{a_j}$, and by performing a segment query (see Definition~\ref{def:sd-seg})
for a subsegment of $uv$ on $\FVD(Q^{a_j})$.

\begin{algorithm} 
  \caption{Parametric point location  
on candidate edge $uv$} 
\label{alg:par-pl}
  \begin{algorithmic}[1]
\State  Find $Q^h$ and $u^h$ (by brute force).
\For{$\ell = (h-1)$ \textbf{downto} $0$}
\State Set $a_0 = u^{\ell+1}$, and $j=0$.
\State
Find $Q^{a_0}$ by performing a walk at level $\ell$ starting at $Q^{\ell+1}$. 
\While{$\df(a_{j}, C) > \df(a_{j}, Q^{a_j})$}
\If {$\df(v,Q^{a_j}) > \df(v,\newc)$}           
\State Find $a_{j+1} \in a_jv$ 
by performing a segment query for $a_jv$ on $\FVD(Q^{a_j})$. 
\State Find $Q^{a_{j+1}} \in F^{(\ell)}$ by a walk at level $\ell$  starting at $Q^{a_j}$. 
\State Set $j = j+1$.
\Else \State Exit and \Return nil.
\EndIf
\EndWhile
\State Set $Q^\ell = Q^{a_j}$ and $u^\ell = a_j$. 
\EndFor
\State Exit and \Return $Q^0$, $u^0$.
  \end{algorithmic}
\end{algorithm}

\begin{lemma}
\label{lemma:segm_query_hvd}
The expected length of the sequence $(a_j)_{j=0}^r$ at one level of the hierarchy is $O(1)$. 
\end{lemma}

\begin{proof}
\label{app:par-pl-hvd}
Consider the sequence $(a_j)$ at level $\ell$. Let $a = a_0$, and let 
$P$ be the cluster at level $\ell+1$ that is nearest to $a$.
We first prove that for each $j = 0,\dots,r-1$,  $a$ is closer to $Q^{a_j}$ than to $P$. 
By the construction of the sequence, 
$\df(a_{j+1}, C) = \df(a_{j+1}, Q^{a_{j}})$ and  $\df(v, Q^{a_{j}}) > \df(v, \newc)$. 
Since clusters $C$ and $Q^{a_j}$ are non-crossing, cluster $Q^{a_{j}}$ is   either
forward or rear limiting for $C$ with respect
 to point $a_{j+1}$ (see Definition~\ref{def:limiting}).
 By Property~\ref{prop:limiting}, $a$ is closer to $Q^{a_{j}}$ than to $C$. 
 Since $\df(a,C) = \df(a,P)$, we have  $\df(a,Q^{a_j}) < \df(a,P)$. 
Similarly to the proof of Lemma~\ref{lemma:walkconstexp},
we can derive that the expected number of clusters in $F^{(\ell)}$ that are closer to $a$ than to $P$ is constant. 
In addition, clusters $Q^{a_j}$ for each $j = 0,\ldots, r-1$, are
distinct. Thus, $r$ is expected $O(1)$, which proves the claim. 
\end{proof}

\begin{lemma}
\label{lemma:par-pl-time}
Parametric point location in the Hausdorff Voronoi hierarchy
can be performed in expected $O(\log{n}\log{k})$ time.
\end{lemma}
\begin{proof}
The expected number of clusters at level $h$ of the Voronoi hierarchy
is O(1)~\cite{KARAVELAS-vd_of_co_in_the_plane}, 
 and 
computing the  distance from a point to a cluster requires
$O(\log{n})$ time; 
thus  Line~1 of  Procedure~\ref{alg:par-pl} requires expected $O(\log{n})$ time. 
 At a level $\ell$, $\ell = 0,\ldots,h-1$, Procedure~\ref{alg:par-pl} identifies 
 points of the sequence $(a_j)$ one by one, each time performing a walk and a segment query. 
The expected number of such walks and segment queries 
is $O(1)$ (see Lemma~\ref{lemma:segm_query_hvd}), 
each walk performs expected $O(1)$ steps  (see Lemma~\ref{lemma:walkconstexp}),
and each step of the  walk requires $O(\log{n})$ time (see Lemma~\ref{lemma:step}).
Each segment query can be performed in time $O(\log{n})$ (see Lemma~\ref{lemma:sepdecomp}). 
Since the expected number of levels in the Voronoi hierarchy is
$O(\log{k})$ (see Lemma~\ref{lemma:vh-size}), the claim follows.
 \end{proof}
 
 \subsection{Updating the Voronoi hierarchy}
 \label{subsec:update-vh}

To insert a new cluster $C$
in the Hausdorff  Voronoi hierarchy, we traverse 
the hierarchy starting at level 0 until a randomly computed maximum
level for $C$, denoted as $\ell(C)$, is found.
Inserting  $\newc$ at a level $\ell$ may make the region of a cluster $P$ at this level empty.

\begin{definition}
 A cluster $P \in \AbsFamily$ is called 
 \emph{critical at level $\ell$} with respect to $C \not\in F$,
if $\hreg^{(\ell-1)}_{\AbsFamily}(P) \neq \emptyset$,
$\hreg^{(\ell-1)}_{\AbsFamily\cup\{C\}}(P) = \emptyset$,
and $\hreg^{(\ell)}_{\AbsFamily\cup\{C\}}(P)\neq\emptyset$.
\end{definition}

Such a critical cluster $P$ becomes an obstacle to
correct point location. 
Indeed, if a query point lies in $\hreg^{(\ell)}_{\AbsFamily\cup\{C\}}(P)$,
we do not know where to continue the point location
at level $\ell-1$. To fix the problem, $P$ 
could be deleted from all levels of the hierarchy, however, this is
computationally expensive.
Instead of deleting $P$, we link $P$ to the cluster or to the  pair of clusters responsible
for the empty region of $P$; 
one of 
the responsible clusters is $C$. 
There are the following cases:

\begin{enumerate}
\item Cluster $C$ is a killer for $P$, and $\ell(C) = \ell-1$.
\item There is a cluster $K \in \AbsFamily^{(\ell-1)}$ such that $\{C,K\}$ is a killing pair for $P$,  
and one of the following holds:
\begin{itemize}
\item[(a)] $\ell(K) \geq \ell$, and $\ell(C) = \ell-1$;
\item[(b)] $\ell(C) \geq \ell$, and $\ell(K) = \ell-1$;
\item[(c)] $\ell(C) = \ell(K)  = \ell-1$.
\end{itemize}
\end{enumerate}

In cases 1 and 2(a), we link cluster $P$ to cluster $C$ only (see Lemma~\ref{lemma:linking-easy})
In case 2(b) we link $P$ to cluster $K$, and in case 2(c) to both
clusters $C$ and $K$. 
In the latter two cases we also need to identify  cluster $K$. 
The linking process is detailed in Procedure~\ref{alg:linking}.

\begin{lemma}
\label{lemma:linking-easy}
Cases~1 or~2(a) occur if and only if all the $P$-mixed vertices 
on the boundary of $\hregl(P)$
are closer to $C$ than to $P$.
\end{lemma}
\begin{proof}
Suppose we have cases 1 or 2(a), i.e., 
$C \in F^{(\ell-1)}$, $C \not\in F^{(\ell)}$, and either $C$ is a killer for 
$P$ (case~1) or there is $K \in F^{(\ell)}$ such that $C, K$ is a killing pair for $P$ (case~2(a)). 
In any of these two cases the addition of $C$ to $\HVD(\AbsFamily^{(\ell)})$  would make 
the region of $P$ empty ($\hreg_{F^{(\ell)}\cup C}(P) = \emptyset$). 
Thus, each point in ${\hreglc}(P)$, including all the $P$-mixed vertices,
is closer to $C$ than to $P$.

Now suppose that all $P$-mixed vertices 
on the boundary of $\hregl(P)$
are closer to $C$ than to $P$. Clearly none of these $P$-mixed vertices is contained in
 $\overline{\hreg}_{\AbsFamily^{(\ell)}\cup{C}}(P)$.
We need to prove that either $C$ is a
killer for $P$ (case 1) or $C,K$ is the killing pair for $P$ for some cluster $K \in F^{(\ell)}$. 
Suppose on the contrary,  that neither of these two cases holds.
Then by Property~\ref{prop:killingpair}, $\hreg_{\AbsFamily^{(\ell)}\cup{C}}(P)$ is not empty, 
and by Property~\ref{prop:connectivity}a,    $\hreg_{\AbsFamily^{(\ell)}\cup{C}}(P)$ has at least two 
$P$-mixed vertices on its boundary. 
These vertices are equidistant to $P$ and $C$; let $v$ be any of them.
Since $P$ and $C$ are non-crossing, 
we have that $C$ is forward or rear limiting for $P$ with respect to $v$ (see Definition~\ref{def:limiting}).
By Property~\ref{prop:limiting},  there is a subtree of $\fskel(P)$ 
incident to $v$ ($\fskel_v^r$ or $\fskel_v^f$) such that all its points are closer to $P$ than to $C$. 
This subtree includes 
at least one $P$-mixed vertex on the boundary of the region of $P$ in  $\HVD(F^{(\ell)}\cup{C})$;
a contradiction. 
\end{proof}

\begin{algorithm} 
 \caption{
Linking cluster $P$ that is critical at level  w.r.t. cluster $C$}
   \label{alg:linking}
  \begin{algorithmic}[1]
  \State Let  $V_{\ell-1}$ be the list of the $P$-mixed vertices
on $\bd \hreg_F^{(\ell-1)}(P)$.

\State Let  $V_{\ell}$ be the list of the $P$-mixed vertices 
on  $\bd \hreg_F^{(\ell)}(P)$.

    \If {all vertices in $V_{\ell}$ are closer to $C$ than to $P$} 
    \State Link  $P$ to $C$ and \Return 
  \Else  
    \State Let $v \in V_{\ell}$ be closer to $P$ than to $C$. 
    \State Let $c \in C$ be such that  $\df(v,\newc)=d(v,c)$.
   \ForAll {$u \in V_{\ell-1}$}
   \State let $p_1, p_2 \in P$ and $q \in Q$ be such that
   \State $v$ borders $\hregl(p_1)$, $\hregl(p_2)$ and $\hregl(q)$.
   \If{$c$ and $q$ lie on different sides of chord $\overline{p_1p_2}$}
   \State set $K = Q$.
   \EndIf
   \EndFor
   \If {$\ell(C) \geq \ell$}
   \State  Link $P$ to $K$ and \Return
   \Else \State{ Link $P$ to $\{C,K\}$ and \Return}
   \EndIf
   \EndIf
  \end{algorithmic} 
\end{algorithm}

\begin{lemma}
 \label{lemma:linking-correct}
Procedure~\ref{alg:linking} performs the linking correctly.
That is,
for any point $x \in \hreg_{F \cup C}^{(\ell)}(P)$, $\df(x,Q) <
\df(x,P)$, where $Q$ is the cluster (or
one of the two clusters) linked to $P$
by Procedure~\ref{alg:linking}.
\end{lemma}
\begin{proof}
We need to prove that Procedure~\ref{alg:linking} always identifies a cluster 
(or a pair of clusters)
such that any point in $\hregl(P)$ is closer to these cluster(s)
than to $P$.

Suppose that Line~4 of the procedure is executed i.e., linking is done to cluster $C$. 
Then by Lemma~\ref{lemma:linking-easy}, cases~1 or~2(a) occur, and
Property~\ref{prop:killingpair} guarantees that any point in $\hregl(P)$ is closer to 
$C$ than to $P$. 
Thus, the linking to $C$  is correctly done.

Suppose that Procedure~\ref{alg:linking} does not terminate at Line~4. Let 
 vertex $v$ and point $c$ be as determined in Lines~6 and~7 respectively.
Since $\df(v,C)> \df(v,P)$ and $\df(v,C) = d(v,c)$, we have 
 $c \not\in \conv{P}$.
Since  $\hreg^{(\ell-1)}_{\AbsFamily\cup\{C\}}(P) = \emptyset$
and $v$ is closer to $P$ than to $C$,
we have $v \not\in \hreg^{(\ell-1)}_{\AbsFamily}(P)$.
 Cluster $\newc$ is (forward or rear) limiting for $P$  with 
 respect to any $P$-mixed vertex $w$
 on the boundary of $\hreg_\AbsFamily^{(\ell-1)}(P)$. Suppose, without loss of generality, 
 that $\newc$ is forward limiting,
i.e., $\newc \subset  \cldf_w \cup \conv{P}$;
then $c \in \cldf_w \setminus \conv{P}$.
 Let $u$ be the first $P$-mixed vertex
 encountered as we traverse $\fskel(P)$ from $v$ to its portion
 enclosed in $\hreg_{\AbsFamily}^{(\ell-1)}(P)$.
 Let $Q$ be the cluster inducing $u$, and $q$ be the point in $Q$ such
 that $d(u,q)=\df(u,Q)$.
The pair $\{Q,\newc\}$ is by definition a killing pair for $P$, and $q$, $c$
lie at opposite
sides of the chord of $P$ inducing $u$.
All other $P$-mixed vertices 
$v_i$
on $\bd \hreg_\AbsFamily^{(\ell-1)}(P)$, 
$v_i\neq u$, 
must be induced by forward limiting clusters (see Property~\ref{prop:limiting}). 
Thus, any point $q_i$, $q_i\neq q$, inducing a $P$-mixed vertex $v_i$ 
must lie on the same
side of the corresponding chord as $c$. Thus, Line~12 correctly sets
$K=Q$.
In Line~13, the condition 
distinguishes between cases~2(b) (Line~14) and~2(c) (Line~16). 
Property~\ref{prop:killingpair} again guarantees 
the correctness of Lines~13-16.
\end{proof}

We summarize 
in the following theorem. 

\begin{theorem}
\label{lemma:vh}
The Voronoi hierarchy for the Hausdorff Voronoi diagram 
of a family of $\nclus$ clusters of total complexity $n$
has expected size $O(n)$. 
Both the point location query and the 
parametric point location query
can be performed in expected time $O(\log{n}\log{\nclus})$. 
Insertion of a cluster takes 
$O((N/{\nclus})\log{n})$ amortized  time, where $N$ is the 
total number of update operations
in all levels during the insertion of all $k$ clusters.
\end{theorem}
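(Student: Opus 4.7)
The plan is to assemble the four claims (size, number of levels, query times, amortized insertion time) from the ingredients already established in Section~\ref{sec:vh}, while being explicit about the per-step costs that the higher-level lemmas leave unquantified.

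First, I would dispatch the size and depth bounds. Lemma~\ref{lemma:vh-size} already gives an expected $O(n)$ bound on the total size of all levels. The same Bernoulli-sampling argument used there, applied to the number of clusters rather than their total complexity, shows that the expected number of non-empty levels is $O(\log_{1/\beta} \nclus) = O(\log \nclus)$: the probability that a fixed cluster survives to level $\ell$ is $\beta^{\ell}$, so in expectation no cluster remains beyond level $\log_{1/\beta} \nclus + O(1)$. This bound on $\nlevs$ is what drives all the logarithmic factors that follow.

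Next, I would bound the cost of a single point location query. The query visits every level from $\nlevs$ down to $0$, and at each level performs a walk. Lemma~\ref{lemma:walkconstexp} ensures that the expected length of the walk at any level is $O(1)$, provided the walk reduces distance monotonically and starts from a site no farther than the site returned at the next higher level; both conditions are built into the walk of Section~\ref{sec:vh}. It then remains to charge each step. A step at cluster $\curClus$ requires (a) identifying the farthest active point $c\in\hat\curClus$ from $q$ by computing the tangents from $q$ to $\CH(\hat\curClus)$, and (b) selecting the neighboring cluster $Q^i$ by locating the ray $\overrightarrow{cq}$ among the rays $\overrightarrow{cv_j}$. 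Since $\hat\curClus$ and the incident pure vertices are stored as sorted lists, each of (a) and (b) takes $O(\log n)$ via binary search, giving $O(\log n)$ per step and hence expected $O(\log n)$ per level. Summing over the $O(\log \nclus)$ levels yields the claimed expected $O(\log n \log \nclus)$ bound for point location.

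The parametric point location bound follows by the same template, replacing Lemma~\ref{lemma:walkconstexp} by Lemma~\ref{lemma:segm_query_hvd}. At each level the procedure performs an expected $O(1)$ number of walk steps and segment queries; each walk step costs $O(\log n)$ as above, and each segment query costs $O(\log n)$ by the separator decomposition of Lemma~\ref{lemma:sepdecomp}. Multiplying by the $O(\log \nclus)$ levels gives expected $O(\log n \log \nclus)$. Finally, for the amortized insertion bound, I would observe that each of the $N$ elementary update operations performed across all levels over the full insertion sequence (face/edge/vertex creations, deletions, linking of critical clusters as in Lemma~\ref{lemma:killingpair}, and the sorted-list maintenance described above) costs $O(\log n)$, because the affected sorted structures have size $O(n)$. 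Charging the total $O(N\log n)$ work uniformly to the $\nclus$ cluster insertions yields amortized $O((N/\nclus)\log n)$ per insertion. The main obstacle is making sure every ``step'' in the walks and in the critical-cluster handling can indeed be implemented in $O(\log n)$ time on the chosen representation; once that per-step cost is fixed, the theorem reduces to combining Lemmas~\ref{lemma:vh-size}, \ref{lemma:walkconstexp}, \ref{lemma:sepdecomp}, and \ref{lemma:segm_query_hvd} with the $O(\log \nclus)$ depth bound.
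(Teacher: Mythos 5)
Your treatment of the size bound, the level-depth bound, and both query bounds matches the paper's proof essentially step for step: Lemma~\ref{lemma:vh-size} for space, the Bernoulli argument for $O(\log\nclus)$ levels, Lemma~\ref{lemma:walkconstexp} (resp.\ Lemma~\ref{lemma:segm_query_hvd}) for an expected constant number of steps per level, and $O(\log n)$ per step via binary searches on the sorted lists (the paper additionally cites Lemma~\ref{lemma:d-reduced} to certify that the walk's distance-decrease precondition holds, which you assert as ``built into the walk'' --- that is acceptable but worth making explicit).

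The one genuine gap is in the amortized insertion bound. You fold the ``linking of critical clusters'' into the $N$ elementary update operations, but $N$ counts insertions and deletions of diagram features, and the linking procedure for a cluster $P$ critical at level $\ell$ must also visit the $P$-mixed vertices of $\hreg^{(\ell)}_{F_i}(P)$, which are \emph{not} deleted and hence are not charged to $N$. Your accounting therefore does not cover this work. The paper closes this with a separate argument: the deleted $P$-mixed vertices at level $\ell-1$ are charged to $N$ as usual, while each surviving $P$-mixed vertex at level $\ell$ can be visited at most twice over the entire algorithm (once when $P$ is critical at level $\ell$ and once when it is critical at level $\ell+1$), and $P$ can be critical at any fixed level at most once; each visit costs $O(\log n)$, so the extra work is absorbed into the claimed bound. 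Without some version of this ``visited at most twice'' observation, the insertion bound does not follow from your charging scheme.
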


\begin{proof}
  The expected space of the Voronoi hierarchy
  is analyzed in Lemma~\ref{lemma:vh-size}.
  Lemmas~\ref{lemma:vh-size} to~\ref{lemma:d-reduced} imply that  
  point location in the Voronoi hierarchy
can be done in expected
  $O(\log{n}\log{\nclus})$ time.
By Lemma~\ref{lemma:par-pl-time}, parametric point location is
performed in 
expected $O(\log{n}\log{k})$ time.
  During the insertion of a cluster,  
  two procedures are performed: updating the diagram at all
 necessary levels, and the linking of regions that disappear.
Since updating each (constant-sized) element of a diagram 
requires $O(1)$ time, the total time required for all update operations to insert all $k$ clusters is 
 $O(N)$. 

 Consider the linking of a cluster $P$ that is 
 critical at level $\ell$ with respect to a cluster $C$.
  We visit the $P$-mixed vertices of $\HVD(\AbsFamily^{(\ell-1)})$, 
but these vertices get  deleted during the same step. We also visit 
the $P$-mixed vertices of $\HVD((\AbsFamily \cup\{ C\})^{(\ell)})$. 
The latter vertices are visited 
  at most twice: 
  when $P$ is critical at level $\ell+1$ 
  and when $P$ is critical at level $\ell$. 
  The  time complexity of each of these visits is $O(\log{n})$.
  Thus, the claimed complexity 
bound
follows.
\end{proof}

\section{Tracing a new Voronoi region}
\label{sec:tracing}

In this section we give details on   
how to compute 
the boundary of a new region
$\hreg_{F_{i}}(C_{i})$  within  $\HVD(F_{i-1})$, given a
representative point $\rep$ in $\hreg_{F_{i}}(C_{i})$.
The first task is to determine a point $w$
on the boundary of $\hreg_{F_{i}}(C_{i})$. Then tracing can be
performed as described in~\cite{EP-HVD-d_and_q}.

We maintain a {refinement} of $\HVD(F)$, 
called the \emph{visibility-based  decomposition}~\cite{EP-HVD-d_and_q}, 
which is denoted by $\HVD^*(F)$ and is obtained as follows:
For every Voronoi vertex $x$ on
the Hausdorff boundary of $\hreg(c)$, $c\in C$,
add to the diagram the segment
$s = {cx} \cap \hreg(c)$, see Fig.~\ref{fig:vb-trace}a. 
A face $f$ of $\HVD^*(F)$ consists of four sides;
one side is a chain of 
the farthest boundary, called the 
\emph{$\fskel$-chain} of $f$, see Fig.~\ref{fig:vb-trace}b.
The \emph{$\fskel$-chain} of $f$ may have non-constant complexity,
however, the other three sides of $f$ consist of at most one edge each.

\deleted{
Once the representative point $\rep$ is identified the tracing of
$\hreg_{F_{i}}(C_{i})$ is done similarly to
\cite{EP-HVD-d_and_q} in time proportional to the complexity of the
new regions and the updates of the diagram. More details are given
in  Section ~\ref{sec:tracing}.
}

The main algorithm in Section~\ref{sec:algorithm} has identified  
a segment $\rep v$, along an edge of  $\fskel(C_i)$,
where $\rep$ is the representative point in $\hreg_{F_{i}}(C_{i})$,  and $v$ is
the parent of $\rep$ in $\fskel(C_i)$ such that  
$v \notin \hreg_{F_{i}}(C_{i})$. 
We determine a $C_i$-mixed vertex $w$ along $\rep v$.
To this goal, we trace segment $\rep v$ through $\HVD(F_{i-1})$,
starting at $\rep$, until we determine $w$.

\begin{figure}
  \begin{minipage}{0.49\linewidth}
  \centering
 \includegraphics {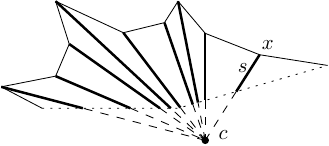}
\\
(a)
  \end{minipage}
 \hfill
   \begin{minipage}{0.49\linewidth}
   \centering
 \includegraphics {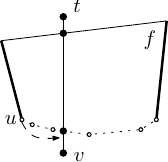}
\\
(b)
  \end{minipage}
  \caption{(a) Visibility-based decomposition of $\hreg_C(c)$; 
  (b) Tracing the $\fskel$-chain of face $f$ (dotted lines), starting from the endpoint $u$.}
  \label{fig:vb-trace}
\end{figure}

In more detail, let  $f$ be a face of $\HVD^*(F_{i-1})$ intersected by $\rep v$. 
Initially, $f$ is the face of $\HVD^*(F_{i-1})$ containing $\rep$. 
In constant time, we check whether 
$w$ lies in the interior of $f$,  and if so we identify $w$.
If it does not, we move to the face $g$ of $\HVD^*(F_{i-1})$ that is adjacent to $f$ and is  
intersected by 
segment
$\rep v$. To identify $g$,  we may need to trace a
portion of the \emph{$\fskel$-chain} of $f$. This is performed as follows:
Among the two endpoints of  the $\fskel$-chain, at least one must be in 
 $\hreg_{F_{i}}(C_{i})$, by Property~\ref{prop:connectivity}a. 
We first identify such an endpoint $u$, and then we trace the
$\fskel$-chain, starting at $u$, until we meet its intersection with $\rep v$, 
see Fig.~\ref{fig:vb-trace}b. 
At this time,  we have determined $g$ and we can continue our search for $w$ with
$f=g$. Tracing the $\fskel$-chain of $f$, starting at $u$, has no effect on
the overall time complexity
because all  traced edges of the $\fskel$-chain intersect
$\hreg_{F_{i}}(C_{i})$,
and thus, they are guaranteed to be deleted from $\HVD(F_{i-1})$ by
the main algorithm during the insertion of $C_i$ at step $i$.
To identify $u$, we consider both endpoints of the $\fskel$-chain,  and 
compare their distances to $C_i$ and  to their closest cluster in $F_{i-1}$.
The latter distance  is readily available from $\HVD(F_{i-1})$.
To derive the former distance, we perform point location in $\FVD(C_i)$.  
Thus, in the worst case, we perform two point locations in $O(\log n)$
time, and 
we trace 
a number of edges of  $\HVD(F_{i-1})$, none of which will appear in $\HVD(F_i)$, 
spending $O(1)$ time per edge.

After $w$ is identified, the 
tracing of the boundary of
$\hreg_{F_{i}}(C_{i})$ is performed in time proportional to  the total
number of edges that are inserted or deleted from the Hausdorff
diagram during step $i$, plus $|C_i|$. 
Note that to identify the new Voronoi vertices we simply walk sequentially along
edges of $\HVD(F_{i-1})$ and $\FVD(C_i)$, which are deleted, using the
visibility-based decomposition \cite{EP-HVD-d_and_q}.
To identify $w$, we also perform point location, thus, an $O(\log n)$ factor is
multiplied to the above quantity.

We conclude that the  time complexity for tracing $\hreg_{F_{i}}(C_{i})$ is proportional to the
number of updates (insertions and deletions) in the Hausdorff diagram as a
result of inserting cluster $C_i$, multiplied by $O(\log{n})$. 
Combining with the overall time complexity analysis, given in the following section,
the total expected time devoted to the
tracing of new regions throughout the algorithm is $O(n\log n)$.

\section{Complexity analysis} 
\label{sec:compl_analysis}

 The running time of our algorithm
  depends on the number of update operations (insertions and deletions)
  during the construction of the diagram.
Based on the Clarkson-Shor technique~\cite{Clarkson_rand_sampling_2},
 we prove that the expectation of this number   is linear, when
 clusters are inserted in random order.
In the Hausdorff Voronoi diagram, sites (clusters) do not have constant size, 
as it is typically assumed in the literature.
Thus, we need to adapt the standard probabilistic arguments in this environment.

\begin{theorem}
\label{thm:numberofoperations}
Given a family $\AbsFamily$ of non-crossing clusters of points, 
the expected total number of update
operations during the randomized incremental construction of
$\hvd(\AbsFamily)$ is $O(n)$, 
 where $n$ is the total complexity of the clusters in $\AbsFamily$.
 \end{theorem}

Theorem~\ref{thm:numberofoperations} can be  extended 
to all levels of the Voronoi hierarchy as stated in the following
corollary. We defer its proof to Section~\ref{subsec:proof-vh-operations}, 
after the proof of Theorem~\ref{thm:numberofoperations}.

\begin{corollary}
\label{lemma:expectedupdatesVH}
The expected number of update operations made on all the levels of the Hausdorff Voronoi hierarchy of $\InpSet$
during the incremental construction is $O(n)$.
\end{corollary}

We conclude with the following theorem.

 \begin{theorem}
 \label{thm:voronoihierarchy}
 The Hausdorff Voronoi diagram of a family $F$ of non-crossing clusters can be
 constructed in
 $O(n\log{n}\log{\nclus})$
 expected time and $O(n)$ expected
 space, where $k$ is the number of clusters in $F$ and $n$ is number of points in all clusters.
 \end{theorem}

 \begin{proof}
  As a preprocessing, we build the 
  centroid decomposition
  for each cluster in $F$, in total $O(n\log n)$ time 
  (see Lemma~\ref{lemma:sepdecomp}).
The algorithm to insert a cluster $C \in F$ does the following:
(1) searches for a representative point in the new Hausdorff Voronoi region;
(2) traces the boundary of the new region
 (see Section~\ref{sec:algorithm}); and~(3) inserts $C$ in
 the Voronoi hierarchy (see Section~\ref{sec:vh}).
 By the
discussion in Section~\ref{sec:tracing}, and by 
Theorem~\ref{thm:numberofoperations} and Corollary~\ref{lemma:expectedupdatesVH}, 
 the total time to perform~(2) and~(3), for all clusters, is expected $O(n\log{n})$.
Searching for a representative point in the Hausdorff
Voronoi region of $C$ (part~(1)) performs  $O(\cardinality{C})$ point
location queries  and at most one parametric point location query
in the Voronoi hierarchy.
Combining with Lemma~\ref{lemma:sepdecomp} and Theorem~\ref{lemma:vh},
we derive that the total expected time to determine
a representative point for all clusters is  $O(n\log{n}\log{\nclus})$;
the claim follows. 
 \end{proof}

\begin{remark}
Deterministic $O(n)$ space complexity can  be achieved by using a
dynamic point location data structure for a planar subdivision
\cite{Arge,Baumgarten}. 
On this data structure,
parametric point location can be performed
as described
in Cheong~{et al.}~\cite{fpvd2011cg}. The time complexity of such a query 
is $t_q^2$, where $t_q$ is the time complexity of  point location in the 
chosen data structure. 
In particular, the data structure by Baumgarten~{et al.}~\cite{Baumgarten}
has $t_q \in O(\log n \log\log n)$, which leads to the 
construction of the Hausdorff Voronoi diagram with expected running time 
$O(n\log^2{n}(\log\log{n})^2)$ and deterministic space $O(n)$.
\end{remark}

\subsection{Proof of Theorem~\ref{thm:numberofoperations}} 
\label{subsec:operations}

In order to count the number of update operations of the
algorithm, 
i.e., insertions and deletions of \emph{features} 
such as vertices, edges, and faces  
of the
incrementally constructed diagram, 
we will associate 
each update operation with a feature of the diagram.
%
Each
feature 
has been inserted by an operation. If a feature
is deleted, then it 
cannot be inserted again in the future.
As a result, the number of deletion
operations is bounded by the number of insertion operations.
%
Thus, we intend to prove that the expected number of features that
appear during the construction of the diagram is $O(n)$.
To 
this  end,
we can ignore features 
that are 
associated only with the farthest Voronoi
diagram  of each cluster, because their total worst case combinatorial
complexity is $O(n)$.

\paragraph{Configurations.}

We give some definitions related to  features of the diagram.

\begin{definition}
\label{def:configuration}
A \emph{configuration} is a triple of points $(p,q,r)$ such that $p$,
$q$, $r$ lie on the boundary of a disk $D$ and $q$ is contained in the
interior of the counterclockwise arc from $p$ to $r$.
We call $D$ the \emph{disk of the configuration}, its center
the \emph{center} of the configuration, and the counterclockwise arc
$pr$ the \emph{arc of the configuration}.
See Fig.~\ref{fig:configuration}.

A configuration is \emph{pure} if its three points
belong to \emph{three} different clusters of $F$
and all other points of these three clusters are contained in the
interior of the disk of the configuration.

A configuration is \emph{mixed} if its three points
belong to \emph{two} different clusters of $F$
and all other points of these two clusters are contained in the
interior of the disk of the configuration.
\end{definition}
From now on, configurations of our interest will be either pure or
mixed. Therefore, each configuration is either associated with three
(a pure one) or two (a mixed one) clusters.

\begin{figure}
\centering
\includegraphics{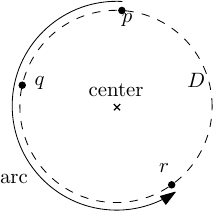}
\caption{A configuration $(p,q,r)$}
\label{fig:configuration}
\end{figure}

\begin{definition}
\label{def:conflict}
A cluster $C$ is in \emph{conflict with a configuration}
if
(a)~$C$~does not contain any of the points in the configuration,
and
(b)~$C$~is contained in the union of
the interior of the disk of the configuration
and the arc of the configuration.

The \emph{weight} of a configuration is the number of clusters
in conflict with it.
\end{definition}

Definition~\ref{def:conflict} is general and it does not follow the general position
assumption stated in Section~\ref{sec:prelim}.
Under this assumption (b) can simplify to: 
``(b)~$C$ is contained in the union of the
interior of the disk of the configuration''.

\begin{lemma}
The number of zero-weight configurations of $F$ is of the same order
as the combinatorial complexity of the Hausdorff Voronoi diagram of
$F$.
\end{lemma}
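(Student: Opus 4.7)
The plan is to establish a bounded-multiplicity correspondence between zero-weight configurations and vertices of $\HVD(F)$, and then appeal to planarity to conclude.

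First, I will map each zero-weight configuration to a vertex of $\HVD(F)$. Consider a pure zero-weight configuration with disk $D$ centered at some point $v$ and points $p \in C_1$, $q \in C_2$, $r \in C_3$. By Definition~\ref{def:configuration} all other points of $C_1, C_2, C_3$ lie strictly inside $D$, so $\df(v, C_i)$ equals the radius of $D$ for $i=1,2,3$. The zero-weight property, combined with the general position assumption that no four input points are cocircular, prevents any other cluster $C' \in F$ from having a point on $\partial D$, and rules out $C' \subseteq \operatorname{int}(D)$. Hence every such $C'$ has a point strictly outside $\bar{D}$, giving $\df(v,C') > \df(v,C_i)$. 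Therefore $v$ is equidistant in the farthest sense from exactly the three clusters $C_1,C_2,C_3$ with no other cluster closer, so $v$ is a pure vertex of $\HVD(F)$. An analogous argument, using two points of one cluster and one of another, sends each mixed zero-weight configuration to a mixed vertex.

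Next, I will bound the number of configurations arising from a single vertex. A pure vertex $v$ has a unique farthest circle $\partial D_v$ through three particular extreme points of its three adjacent clusters. Among the six orderings of these three points, exactly three satisfy ``$q$ lies on the counterclockwise arc from $p$ to $r$'', namely the three cyclic rotations. Under the general position assumption, each of the three rotations is itself a valid zero-weight configuration (no other cluster intrudes on any arc of $\partial D_v$), and no zero-weight configuration of a different triple can have center $v$. The same count works for mixed vertices. So the mapping is exactly $3$-to-$1$, giving
\begin{equation*}
\#\{\text{zero-weight configurations of } F\} \;=\; 3\cdot\#\{\text{vertices of }\HVD(F)\}.
\end{equation*}
Because $\HVD(F)$ is a connected planar subdivision, Euler's formula yields that its combinatorial complexity (vertices, edges, and faces together) is $\Theta$ of the number of its vertices, so the number of zero-weight configurations is of the same order as the complexity of $\HVD(F)$.

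The principal obstacle lies in the arc component of the zero-weight definition: in a degenerate situation some cluster of $F$ could have a point lying on the configuration's arc without being strictly inside the disk, which would spoil a configuration from being zero-weight even when its center is a genuine vertex of $\HVD(F)$. The general position hypothesis (no four cocircular points) rules out exactly this degeneracy and keeps the correspondence clean; a more delicate perturbation argument would be required in its absence.
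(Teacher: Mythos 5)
Your argument is essentially the paper's: each zero-weight configuration is centered at a vertex of $\HVD(F)$ with bounded multiplicity (you make the multiplicity explicit as the three cyclic rotations of the triple, and you correctly use the no-four-cocircular assumption to show the center is a genuine pure or mixed vertex), and the vertex count controls the overall complexity via planarity. The one overstatement is your displayed equality: the map from configurations to vertices is not onto, since vertices of $\fskel(C)$ lying on the farthest boundary inside $\hreg(C)$ are vertices of $\HVD(F)$ that are centers of no pure or mixed configuration, so you only get $\#\{\text{configurations}\} \le 3\cdot\#\{\text{vertices}\}$; the reverse bound holds only after setting aside these farthest-Voronoi-only features, which is exactly the caveat the paper makes in the preceding paragraph (their total number is $O(n)$ and they are ignored throughout).
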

\begin{proof}
Each zero-weight configuration is associated with a vertex of the
Hausdorff Voronoi diagram.
Indeed, the center of this configuration is at the vertex and the disk
of the configuration
that contains the clusters associated with the
configuration.
Consider a vertex $v$ of the Hausdorff Voronoi diagram. The degree of
$v$ in the diagram
equals the number of configurations with center $v$ plus the number
of some features that are associated just with farthest Voronoi
diagrams (that we have claimed before that we can ignore).
As a result, zero-weight configurations estimate well the combinatorial
complexity of the Hausdorff Voronoi diagram.
\end{proof}

\paragraph{Configurations of weight at most $k$.}

Let $\Kpure_0(F)$, $\Kpure_k(F)$, $\Kpure_{\leq k}(F)$ denote the sets of
pure configurations of zero weight, weight equal to $k$, and weight at most
$k$, of a family $F$ of non-crossing clusters, respectively.
Let $\Npure_0(F)$, $\Npure_k(F)$, $\Npure_{\leq k}(F)$
denote the cardinality of the
aforementioned sets, respectively.
Define analogously the sets of mixed configurations
$\Kmix_0(F)$, $\Kmix_k(F)$, $\Kmix_{\leq k}(F)$ and their
cardinalities
$\Nmix_0(F)$, $\Nmix_k(F)$, $\Nmix_{\leq k}(F)$, respectively.
Both $\Npure_0(F)$ and
$\Nmix_0(F)$ are 
\(  O\bigl(\sum_{C \in F}\cardinality{C}\bigr) = O(n) \)~\cite{ep2004algorithmica}.
Then, using the Clarkson-Shor
technique~\cite{Clarkson_rand_sampling_2},
and  in particular
~\cite[Theorem~1.2]{Sharir}, with a random sample of the clusters in $F$, we  obtain:

\begin{equation*}
\Npure_{\leq k}(F) \leq \cpure \cdot n k^2
\text{ and }
\Nmix_{\leq k}(F)  \leq \cmix \cdot n k  ,
\end{equation*}
for $k>0$ and some constants $\cpure$ and $\cmix$.
The details to obtain 
these bounds are quite standard and we
refer the interested reader to \cite{Clarkson_rand_sampling_2,Sharir}.

\paragraph{Appearance of a feature.}

Consider a configuration $c$ of weight $k$ in family $F$ with $m$
clusters.\footnote{Note the difference in notation from previous sections: 
here $k$ denotes the weight of a configuration and $m$ denotes the number of clusters in F.
We do this change in order to be consistent with the notation of Sharir~\cite{Sharir}.} 
Assume the Hausdorff Voronoi diagram of $F$ is constructed with the
incremental algorithm and the clusters are inserted according to
permutation $\pi$.
The feature corresponding to $c$ appears at some stage of the
incremental algorithm if and only if
the clusters associated with $c$ occur in $\pi$ \emph{before} the
$k$ clusters that conflict with configuration $c$.
This event happens with probability
\[\Pr[\text{pure $c$ feature appears}] = \frac{3!k!}{(k+3)!} =
\frac{6}{(k+1)(k+2)(k+3)}\]
for pure configurations, and with probability
\[\Pr[\text{mixed $c$ feature appears}] = \frac{2!k!}{(k+2)!} =
\frac{2}{(k+1)(k+2)}\]
for mixed configurations.

The expected number of appearances of features corresponding to a pure
configuration is therefore:
\begin{align*}
& \sum_{k=0}^{m-3}  \sum_{c \in \Kpure_k(F)} \Pr[\text{pure $c$ feature appears}]
= \sum_{k=0}^{m-3} \sum_{c \in \Kpure_k(F)}
\frac{6}{(k+1)(k+2)(k+3)}
\\
& \quad {}  = 6 \sum_{k=0}^{m-3} \frac{\Npure_k(F)}{(k+1)(k+2)(k+3)}
=
\Npure_0(F) +
6 \sum_{k=1}^{m-3}
\frac{\Npure_{\leq{k}}(F) - \Npure_{\leq{k-1}}(F)}{(k+1)(k+2)(k+3)}
\displaybreak[0]\\
& \quad {}  =
\frac{3}{4}\Npure_0(F) +
18 \sum_{k=1}^{m-4} \frac{\Npure_{\leq{k}}(F)}{(k+1)(k+2)(k+3)(k+4)} +
\frac{\Npure_{\leq{m-3}}(F)}{(m-2)(m-1)m}
\displaybreak[0]\\
& \quad {} \leq
\frac{3}{4}\Npure_0(F) +
18 \sum_{k=1}^{m-4} \frac{\cpure \cdot n k^2}{(k+1)(k+2)(k+3)(k+4)} +
\frac{\cpure\cdot n(m-3)^2}{(m-2)(m-1)m}
\displaybreak[0]\\
& \quad {} \leq
\frac{3}{4}\Npure_0(F) +
18 \cdot \cpure \cdot n \sum_{k=1}^{m-4} \frac{1}{k^2} +
\frac{\cpure\cdot n}{m}
=
O(n)
\end{align*}

Similarly, the expected number of appearances of features
corresponding to a mixed configuration is:
\begin{align*}
& \sum_{k=0}^{m-2}  \sum_{c \in \Kmix_k(F)} \Pr[\text{mixed $c$ feature appears}]
= \sum_{k=0}^{m-2} \sum_{c \in \Kmix_k(F)}
\frac{2}{(k+1)(k+2)}
\\
& \quad {}  = 2 \sum_{k=0}^{m-2} \frac{\Nmix_k(F)}{(k+1)(k+2)}
=
\Nmix_0(F) +
2 \sum_{k=1}^{m-2}
\frac{\Nmix_{\leq{k}}(F) - \Nmix_{\leq{k-1}}(F)}{(k+1)(k+2)}
\displaybreak[0]\\
& \quad {}  =
\frac{1}{2}\Nmix_0(F) +
 4 \sum_{k=1}^{m-3} \frac{\Nmix_{\leq{k}}(F)}{(k+1)(k+2)(k+3)} +
\frac{\cmix\cdot n(m-2)}{(m-1)m}
\displaybreak[0]\\
 & \quad {} \leq
\frac{1}{2}\Nmix_0(F) +
4 \cdot \cmix \cdot n \sum_{k=1}^{m-3} \frac{1}{k^2} +
\frac{\cmix\cdot n}{m}
=
O(n)
\end{align*}

 Therefore, we have proved the following, which implies
Theorem~\ref{thm:numberofoperations}.

\begin{lemma}
\label{lemma:expectedupdates}
The expected number of features that appear during the incremental
construction is $O(n)$.
\end{lemma}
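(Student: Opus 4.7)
The plan is to apply the Clarkson-Shor random-sampling framework to the appropriate combinatorial objects. The first step is to set up a notion of \emph{configuration} that abstracts the features of interest. I would define a configuration as a triple of points $(p,q,r)$ lying on a common circle, attach to it the closed disk it bounds, and classify it as \emph{pure} if $p,q,r$ belong to three distinct clusters of $F$ or \emph{mixed} if they belong to two. The weight of a configuration would be the number of clusters other than its defining ones that lie inside the associated disk. Every vertex of $\HVD(F_i)$ that ever appears during the randomized incremental construction corresponds to a configuration of weight zero \emph{with respect to} $F_i$. By monotonicity of the Hausdorff Voronoi diagram, once such a vertex is destroyed it does not reappear, so the number of deletions is bounded by the number of insertions, and it suffices to count the expected number of configurations that are ever zero-weight. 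Features lying entirely inside $\FVD(C)$ of a single cluster $C$ can be discarded since their cumulative worst-case complexity is already $O(n)$.

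The second step is to bound $\Npure_{\le k}(F)$ and $\Nmix_{\le k}(F)$, the total number of configurations of weight at most $k$. Using the known $O(n)$ bound on the complexity of $\HVD$ for non-crossing clusters, one has $\Npure_0(F) = O(n)$ and $\Nmix_0(F) = O(n)$. Standard Clarkson-Shor random sampling at rate $1/k$ then yields
\[
\Npure_{\le k}(F) = O(n k^2), \qquad \Nmix_{\le k}(F) = O(n k).
\]

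The third step is a direct probabilistic computation. For a fixed configuration of weight $k$, the feature appears at some moment during the incremental construction if and only if its defining clusters precede all $k$ conflicting clusters in the random permutation. This happens with probability $\frac{3!\,k!}{(k+3)!}=\frac{6}{(k{+}1)(k{+}2)(k{+}3)}$ for pure and $\frac{2!\,k!}{(k+2)!}=\frac{2}{(k{+}1)(k{+}2)}$ for mixed configurations. Summing over all weights and applying Abel summation (summation by parts) to convert $\sum_k \Npure_k / \bigl((k{+}1)(k{+}2)(k{+}3)\bigr)$ into a sum involving $\Npure_{\le k}$, the bounds from the second step produce a series dominated by $O(n)\sum_{k \ge 1} 1/k^2 = O(n)$; the mixed sum is handled identically.

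The main obstacle I expect is the first step: carefully verifying the bijection between features that ever appear and zero-weight configurations. One must deal with the atypical aspects of the Hausdorff Voronoi diagram (sites of non-constant size, empty regions, $C$-mixed versus pure vertices) and with the fact that the incremental algorithm reasons about configurations whose weight is computed against the intermediate family $F_i$ rather than $F$. Once this correspondence is pinned down, the rest of the argument is a routine Clarkson-Shor calculation, and the non-triviality lies precisely in the fact that the framework must accommodate sites of non-constant complexity rather than the usual constant-size sites.
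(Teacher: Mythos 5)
Your proposal follows essentially the same route as the paper: the same pure/mixed configuration setup, the reduction via monotonicity and the $O(n)$ bound on features internal to the individual farthest Voronoi diagrams, the Clarkson--Shor bounds $\Npure_{\leq k}(F)=O(nk^2)$ and $\Nmix_{\leq k}(F)=O(nk)$, the identical appearance probabilities, and the concluding summation by parts. The only (minor) divergence is that the paper's conflict relation uses the interior of the disk together with the directed arc of the configuration rather than the disk alone, a technicality that your flagged ``first step'' would have to pin down, but the argument is otherwise the paper's own.
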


\subsection{Proof of Corollary~\ref{lemma:expectedupdatesVH}}
\label{subsec:proof-vh-operations}

	By the
discussion in the proof of Theorem~\ref{thm:numberofoperations},  the expected number of
structural changes 
during the incremental construction
is proportional to the expected number of appearing features	 (i.e.,  pure and mixed vertices).
   For a fixed level $\ell$, the expected total number of points in $F^{(\ell)}$ is $\beta^\ell n$.
		  By Lemma~\ref{lemma:expectedupdates},
		  the expected number of features that appear during the incremental
		  construction of $\HVD(F^{(\ell)})$ at level $\ell$ is
		   $O(\beta^\ell n)$.
		  Therefore, the expected total number of features that appear at  all the levels
		  is at most $\sum_{\ell=0}^{\infty}O(\beta^\ell n) = O(n)$.

\section{Discussion and Open Problems}
\label{sec:discus}
We have provided 
algorithms of improved complexity  for constructing the
Hausdorff Voronoi diagram of a family of non-crossing
clusters of points. 
These algorithms are based on randomized incremental construction and point
location. 
There is still a gap in the complexity of constructing the
Hausdorff Voronoi diagram between our
$O(n\log^2{n})$ expected-time 
algorithm and the well-known
$\Omega(n\log{n})$ time lower bound.
An open problem is to close 
or reduce this gap.
It is interesting that in the
$L_{\infty}$ metric a simple  $O(n\log{n})$-time $O(n)$-space
algorithm is known~\cite{PX15}, 
which is 
based on a two-phase plane sweep.
In future research, 
 we plan to 
consider
families of arbitrary clusters that may be crossing. 
In this case, the size of the diagram can vary
from linear to quadratic, and therefore, an output-sensitive 
approach would be 
 most desirable. 

\section*{Acknowledgements}
We thank an anonymous reviewer for valuable comments that helped improve the presentation of this paper.


\bibliographystyle{spmpsci}     
\bibliography{hvd-bibl}

\begin{thebibliography}{10}
\providecommand{\url}[1]{{#1}}
\providecommand{\urlprefix}{URL }
\expandafter\ifx\csname urlstyle\endcsname\relax
  \providecommand{\doi}[1]{DOI~\discretionary{}{}{}#1}\else
  \providecommand{\doi}{DOI~\discretionary{}{}{}\begingroup
  \urlstyle{rm}\Url}\fi

\bibitem{ahknu1997dcg}
Abellanas, M., Hernandez, G., Klein, R., Neumann-Lara, V., Urrutia, J.: A
  combinatorial property of convex sets.
\newblock Discrete Comput. Geom. \textbf{17}(3), 307--318 (1997)

\bibitem{ahiklmps2001ewcg}
Abellanas, M., Hurtado, F., Icking, C., Klein, R., Langetepe, E., Ma, L.,
  Palop, B., Sacrist{\'a}n, V.: The farthest color {Voronoi} diagram and
  related problems.
\newblock In: Proceedings of the 17th European Workshop on Computional Geometry
  (EWCG), pp. 113--116 (2001)

\bibitem{Arge}
Arge, L., Brodal, G.S., Georgiadis, L.: Improved dynamic planar point location.
\newblock In: Proceedings of the 47th Symposium on Foundations of Computer
  Science (FOCS), pp. 305--314 (2006)

\bibitem{aronov06data}
Aronov, B., Bose, P., Demaine, E.D., Gudmundsson, J., Iacono, J., Langerman,
  S., Smid, M.: Data structures for halfplane proximity queries and incremental
  {Voronoi} diagrams.
\newblock In: Proceedings of the 7th Latin American Symposium on Theoretical
  Informatics (LATIN), pp. 80--92. Springer Berlin Heidelberg (2006)

\bibitem{Aurenbook}
Aurenhammer, F., Klein, R., Lee, D.T.: Voronoi Diagrams and Delaunay
  Triangulations.
\newblock World Scientific Publishing Company, Singapore (2013)

\bibitem{Baumgarten}
Baumgarten, H., Jung, H., Mehlhorn, K.: Dynamic point location in general
  subdivisions.
\newblock J. Algorithm. \textbf{17}(3), 342--380 (1994)

\bibitem{Boissonat-curved_vd}
Boissonnat, J.D., Wormser, C., Yvinec, M.: Curved {Voronoi} diagrams.
\newblock In: J.D. Boissonnat, M.~Teillaud (eds.) Effective Computational
  Geometry for Curves and Surfaces, pp. 67--116. Springer Berlin Heidelberg
  (2006)

\bibitem{fpvd2011cg}
Cheong, O., Everett, H., Glisse, M., Gudmundsson, J., Hornus, S., Lazard, S.,
  Lee, M., Na, H.S.: Farthest-polygon {Voronoi} diagrams.
\newblock Comput. Geom. \textbf{44}(4), 234--247 (2011)

\bibitem{Clarkson_rand_sampling_2}
Clarkson, K., Shor, P.: Applications of random sampling in computational
  geometry, {II}.
\newblock Discrete Comput. Geom. \textbf{4}(1), 387--421 (1989)

\bibitem{Dehne-coarse}
Dehne, F., Maheshwari, A., Taylor, R.: A coarse grained parallel algorithm for
  {Hausdorff} {Voronoi} diagrams.
\newblock In: Proceedings of the 35th International Conference on Parallel
  Processing (ICPP), pp. 497--504 (2006)

\bibitem{devillers-delaunay_hier}
Devillers, O.: {The Delaunay Hierarchy}.
\newblock Int. J. Found. Comput. S. \textbf{13}(2), 163--180 (2002)

\bibitem{edels-extreme}
Edelsbrunner, H.: Computing the extreme distances between two convex polygons.
\newblock J. Algorithm. \textbf{6}(2), 213--224 (1985)

\bibitem{EGS1989}
Edelsbrunner, H., Guibas, L.J., Sharir, M.: The upper envelope of piecewise
  linear functions: algorithms and applications.
\newblock Discrete Comput. Geom. \textbf{4}(1), 311--336 (1989)

\bibitem{hks1993dcg}
Huttenlocher, D.P., Kedem, K., Sharir, M.: The upper envelope of {Voronoi}
  surfaces and its applications.
\newblock Discrete Comput. Geom. \textbf{9}(1), 267--291 (1993)

\bibitem{KARAVELAS-vd_of_co_in_the_plane}
Karavelas, M.I., Yvinec, M.: The voronoi diagram of planar convex objects.
\newblock In: G.~Di~Battista, U.~Zwick (eds.) Algorithms -- ESA 2003,
  \emph{Lecture Notes in Computer Science}, vol. 2832, pp. 337--348. Springer
  Berlin Heidelberg (2003)

\bibitem{KP14}
Khramtcova, E., Papadopoulou, E.: A simple {RIC} for the {H}ausdorff {V}oronoi
  diagram of non-crossing clusters.
\newblock In: Abstracts of the 30th European Workshop on Computational Geometry
  (EuroCG) (2014)

\bibitem{klein-habilitation}
Klein, R.: Concrete and abstract {Voronoi} diagrams, \emph{Lecture Notes in
  Computer Science}, vol. 400.
\newblock Springer (1989)

\bibitem{klein-avd}
Klein, R., Mehlhorn, K., Meiser, S.: Randomized incremental construction of
  abstract {Voronoi} diagrams.
\newblock Comput. Geom. \textbf{3}(3), 157--184 (1993)

\bibitem{snoyeink}
McAllister, M., Kirkpatrick, D.G., Snoeyink, J.: A compact piecewise-linear
  {Voronoi} diagram for convex sites in the plane.
\newblock Discrete Comput. Geom. \textbf{15}(1), 73--105 (1996)

\bibitem{megiddo-centroid-decomp}
Megiddo, N., Tamir, A., Zemel, E., Chandrasekaran, R.: An ${O}(n\log^2n)$
  algorithm for the $k$th longest path in a tree with applications to location
  problems.
\newblock SIAM J. Comput. \textbf{10}(2), 328--337 (1981)

\bibitem{Papadopoulou2001ieeecad}
Papadopoulou, E.: Critical area computation for missing material defects in
  {VLSI} circuits.
\newblock IEEE T. Comput. Aid. D. \textbf{20}(5), 583--597 (2001)

\bibitem{ep2004algorithmica}
Papadopoulou, E.: The {Hausdorff} {Voronoi} diagram of point clusters in the
  plane.
\newblock Algorithmica \textbf{40}(2), 63--82 (2004)

\bibitem{Papadopoulou2011ieeecad}
Papadopoulou, E.: Net-aware critical area extraction for opens in {VLSI}
  circuits via higher-order {Voronoi} diagrams.
\newblock IEEE T. Comput. Aid D. \textbf{30}(5), 704--716 (2011)

\bibitem{EP-HVD-d_and_q}
Papadopoulou, E., Lee, D.T.: The {Hausdorff} {Voronoi} diagram of polygonal
  objects: a divide and conquer approach.
\newblock Int. J. Comput. Geom. Ap. \textbf{14}(6), 421--452 (2004)

\bibitem{PX15}
Papadopoulou, E., Xu, J.: The {$L_\infty$} {H}ausdorff {V}oronoi diagram
  revisited.
\newblock Int. J. Comput. Geom. Ap. \textbf{25}(2), 123--141 (2015)

\bibitem{pugh}
Pugh, W.: Skip lists: a probabilistic alternative to balanced trees.
\newblock Commun. ACM \textbf{33}(6), 668--676 (1990)

\bibitem{Seidel}
Seidel, R.: The nature and meaning of perturbations in geometric computing.
\newblock Discrete Comput. Geom. \textbf{19}(1), 1--17 (1998)

\bibitem{Sharir}
Sharir, M.: The {C}larkson-{S}hor technique revisited and extended.
\newblock Comb. Probab. Comput. \textbf{12}(2), 191--201 (2003)

\end{thebibliography}
\end{document}